\numberwithin{equation}{section}
\newtheorem{definition}{Definition}[section]
\newtheorem{lemma}[definition]{Lemma}
\newtheorem{theorem}[definition]{Theorem}
\newtheorem{proposition}[definition]{Proposition}
\newtheorem{corollary}[definition]{Corollary}
\newtheorem{remarkth}[definition]{Remark}
\renewcommand{\emph}[1]{{\bfseries\itshape{#1}}}
\newcommand{\R}{\mathbb{R}}      
\newcommand{\ltilde}[3][0]{\altura=0 \advance\altura by #1
           \ancho=#2 \anchom=\ancho \divide\anchom by 2
           \anchoa=\ancho \divide\anchoa by 4
           \anchob=\anchom \advance\anchob by \anchoa
           \kern-3pt \begin{array}[b]{c}
           \begin{picture}(1,1)(\anchom,-\altura)
        \qbezier(0,2)(\anchoa,5)(\anchom,2)
        \qbezier(\anchom,2)(\anchob,-1)(\ancho,4)
        \qbezier(0,2)(\anchoa,4.5)(\anchom,1.8)
        \qbezier(\anchom,1.8)(\anchob,-1.5)(\ancho,4)
       \end{picture} \\[-4pt]{#3}
                       \end{array} \kern-4pt    }
\newcommand{\lhat}[3][0]{\altura=0 \advance\altura by #1
           \ancho=#2 \anchom=\ancho \divide\anchom by 2
           \anchoa=\ancho \divide\anchoa by 4
           \anchob=\anchom \advance\anchob by \anchoa
           \kern-3pt \begin{array}[b]{c}
           \begin{picture}(1,1)(\anchom,-\altura)
        \qbezier(0,2)(\anchoa,4)(\anchom,6)
        \qbezier(\anchom,6)(\anchob,4)(\ancho,2)
        \qbezier(0,2)(\anchoa,3.8)(\anchom,5.6)
        \qbezier(\anchom,5.6)(\anchob,3.8)(\ancho,2)
       \end{picture} \\[-4pt] {#3}
                       \end{array} \kern-4pt    }
\newcommand{\lcf}{\lbrack\! \lbrack}
\newcommand{\rcf}{\rbrack\! \rbrack}
\newcommand\prol{\@ifstar{\@proldf}{\@prolpf}}  
\def\@prolpf{\@ifnextchar[{\@prolpf@wrt}{\@prolpf@}}
\def\@prolpf@wrt[#1]#2{\@ifnextchar[{\@prolpf@wrt@at{#1}{#2}}{\@prolpf@wrt@{#1}{#2}}}
\def\@prolpf@wrt@at#1#2[#3]{\prolsymbol^{#1}_{#3}#2}
\def\@prolpf@wrt@#1#2{\prolsymbol^{#1}#2}
\def\@prolpf@#1{\@ifnextchar[{\@prolpf@at{#1}}{\@prolpf@@{#1}}}
\def\@prolpf@at#1[#2]{\prolsymbol_{#2}#1}
\def\@prolpf@@#1{\prolsymbol#1}
\def\@proldf{\@ifnextchar[{\@proldf@wrt}{\@proldf@}}
\def\@proldf@wrt[#1]#2{\@ifnextchar[{\@proldf@wrt@at{#1}{#2}}{\@proldf@wrt@{#1}{#2}}}
\def\@proldf@wrt@at#1#2[#3]{\prolsymbol^{*#1}_{#3}#2}
\def\@proldf@wrt@#1#2{\prolsymbol^{*#1}#2}
\def\@proldf@#1{\@ifnextchar[{\@proldf@at{#1}}{\@proldf@@{#1}}}
\def\@proldf@at#1[#2]{\prolsymbol^*_{#2}#1}
\def\@proldf@@#1{\prolsymbol^*#1}
\def\prolsymbol{\mathcal{T}}
\def\lcf{\lbrack\! \lbrack}
\def\rcf{\rbrack\! \rbrack}
\begin{document}
{\Large

\title[Hamiltonian dynamics on Lie algebroids, unimodularity
and preservation of volumes]{Hamiltonian dynamics on Lie
algebroids, unimodularity and preservation of volumes}

\author[J.\ C.\ Marrero]{Juan C.\ Marrero}
\address{Juan C.\ Marrero:
ULL-CSIC Geometr\'{\i}a Diferencial y Mec\'anica Geom\'etrica\\
Departamento de Matem\'atica Fundamental, Facultad de
Ma\-te\-m\'a\-ti\-cas, Universidad de la Laguna, La Laguna,
Tenerife, Canary Islands, Spain} \email{jcmarrer@ull.es}

\thanks{This work has been partially supported by MEC (Spain)
Grants MTM 2006-03322 and project "Ingenio Mathematica" (i-MATH)
No. CSD 2006-00032 (Consolider-Ingenio 2010). The author would
like to thank Prof. Yuri Fedorov for stimulating discussions on
the subject. He also thanks the Department of Applied Mathematics
I, UPC (Barcelona), for the hospitality offered to him during a
visit to Barcelona where part of this work was done}

\keywords{Lie algebroids, linear Poisson structure, Hamiltonian
dynamics, modular class, unimodularity, voulme forms}

\subjclass[2000]{17B66, 53D17, 70G45, 70G65, 70H05}

\begin{abstract}
In this paper we discuss the relation between the unimodularity of
a Lie algebroid $\tau_{A}: A \to Q$ and the existence of invariant
volume forms for the hamiltonian dynamics on the dual bundle
$A^*$. The results obtained in this direction are applied to
several hamiltonian systems on different examples of Lie
algebroids.
\end{abstract}

\maketitle

\tableofcontents

\section{Introduction}
It is well-known that Hamilton equations in Classical Mechanics
may be written in an intrinsic way using the canonical symplectic
structure $\omega_{Q}$ of the phase space of momenta $T^*Q$. In
fact, if $H$ is the hamiltonian energy, the sum of the kinetic and
the potential energy, then the solutions of the Hamilton equations
for $H$ are the integral curves of the hamiltonian vector field
${\mathcal H}_{H}^{\Pi_{T^*Q}}$ of $H$ with respect to
$\omega_{Q}$. The flow of ${\mathcal H}_{H}^{\Pi_{T^*Q}}$
preserves the symplectic form. Thus, one directly deduces {\em
Liouville's theorem}: {\em the flow of the hamiltonian vector
field preserves the symplectic volume} (see, for instance,
\cite{AbMa}).

However, for a hamiltonian system on a general Poisson manifold,
not necessarily symplectic, the flow of the hamiltonian vector
field doesn't preserves, in general, a volume form on the phase
space. We remark that the existence of invariant volume forms is
interesting for reasons of integrability (see \cite{Ko1} and the
references therein). So, it is important to obtain necessary and
sufficient conditions for a hamiltonian system on a Poisson
manifold admits an invariant volume. In this direction, a very
nice result was proved by Kozlov \cite{Ko}: {\em for a hamiltonian
system of kinetic type on the dual space ${\frak g}^*$ of a Lie
algebra ${\frak g}$ the flow of the corresponding hamiltonian
vector field preserves a volume form on ${\frak g}^*$ if and only
if the Lie algebra ${\frak g}$ is unimodular}. Note that, in this
case, the Poisson structure on ${\frak g}^*$ is the Lie-Poisson
structure induced by the Lie algebra structure of ${\frak g}$.

Hamiltonian systems on the dual space of a Lie algebra ${\frak g}$
arise, in a natural way, as the reduction of standard
left-invariant hamiltonian systems with configuration space a Lie
group. More generally, hamiltonian systems on Poisson manifolds
arise, in a natural way, as the reduction of standard hamiltonian
systems which are invariant under the action of a symmetry Lie
group $G$. The typical situation is the following one (see, for
instance, \cite{OrRa}). The configuration space $Q$ of the
mechanical system is the total space of a principal $G$-bundle
$p:Q \to Q/G$. So, the reduced phase space is a vector bundle
$T^*Q/G$ over $Q/G$ and the Poisson structure on it is linear.
This implies that the dual bundle $TQ/G$ (over $Q/G$) admits a Lie
algebroid structure. In fact, the vector bundle $\tau_{TQ/G}: TQ/G
\to Q/G$ is just the Atiyah algebroid associated with the
principal $G$-bundle $p: Q\to Q/G$ (see \cite{Mac}).

We recall that Lie algebroids are a natural generalization of
tangent bundles and Lie algebras and that there exists a
one-to-one correspondence between Lie algebroid structures on a
vector bundle $\tau_{A}: A \to Q$ and linear Poisson structures on
the dual bundle $A^*$ (see \cite{Co,Mac}). Thus, if we have a Lie
algebroid structure on the vector bundle $\tau_{A}: A \to Q$ and a
hamiltonian function $H$ of mechanical type on $A^*$, we can
consider the hamiltonian vector field of $H$ with respect to the
linear Poisson structure $\Pi_{A^*}$ on $A^*$ and the
corresponding dynamical system on $A^*$. Using this procedure, one
may recover the standard Hamilton equations (when $A$ is the
standard Lie algebroid $\tau_{TQ}: TQ \to Q$), the Lie-Poisson
equations (when $A$ is a Lie algebra as a Lie algebroid over a
single point), the Hamilton-Poincar\'e equations (when $A$ is the
Atiyah algebroid associated with a principal $G$-bundle), the Lie
Poisson equations on the dual of a semidirect product of Lie
algebras (when $A$ is an action Lie algebroid over a real vector
space),... (see \cite{LeMaMa,We0}).

So, a natural problem arise: to find necessary and sufficient
conditions for a hamiltonian system on the dual bundle to a Lie
algebroid $\tau_{A}: A \to Q$ admits an invariant volume.

In this paper, we will obtain such conditions. For this purpose,
we will use a geometrical object associated with the Lie
algebroid: {\em the modular class} of $\tau_{A}: A \to Q$.

The modular class of a Lie algebroid $A$ was introduced in
\cite{EvLuWe} (see also \cite{We}) as follows. We will assume that
$Q$ and the vector bundle are orientable and we fix a volume form
$\nu$ on $Q$ and a section $\Lambda \in \Gamma(\Lambda^nA)$ such
that $\Lambda_{q} \neq 0$, for all $q\in Q$, where $n$ is the rank
of $A$. Then, $\nu$ and $\Lambda$ induce a volume form on $\nu
\wedge \Lambda$ on $A^*$ and the modular section of $A$ with
respect to $\nu$ and $\Lambda$ is a section of $\tau_{A^*}: A^*
\to Q$ whose vertical lift is just the modular vector field of
$\Pi_{A^*}$ with respect to the volume form $\nu\wedge \Lambda$.
We remark that such a vector field is defined using the divergence
of the hamiltonian vector fields on $A^*$ with respect to the
volume $\nu\wedge \Lambda$. The modular section defines a
cohomology class in the cohomology complex of $A$ with trivial
coefficients. This cohomology class doesn't depend on the volumes
$\nu$ and $\Lambda$ and it is called the modular class of the Lie
algebroid $A$. $A$ is said to be unimodular if its modular class
vanishes. The standard Lie algebroid $\tau_{TQ}: TQ \to Q$ is
unimodular and a Lie algebra ${\frak g}$ is unimodular as a Lie
algebroid over a single point if and only if it is unimodular in
the classical sense, that is, the modular character of ${\frak g}$
is zero (for more details, see \cite{EvLuWe} and Section
\ref{sec3.2}).

Using the modular class of the Lie algebroid $\tau_{A}: A \to Q$
we deduce the main results of the paper. In fact, if $H: A^* \to
\R$ is a hamiltonian function of mechanical type, we prove the
following facts:
\begin{itemize}
\item
{\em $A$ is unimodular if and only if the hamiltonian dynamics
preserves a volume form on $A^*$ of basic type.}

We remark that a volume form $\Phi = e^{\tilde{\sigma}}\nu \wedge
\Lambda$ on $A^*$, with $\tilde{\sigma} \in C^{\infty}(A^*)$, is
of basic type if $\tilde{\sigma}$ is a basic function, that is,
$\tilde{\sigma} \in C^{\infty}(Q)$. The previous result
generalizes Liouville's theorem (note that the standard Lie
algebroid $\tau_{TQ}: TQ \to Q$ is unimodular).

\item
{\em If the potential energy is constant then $A$ is unimodular if
and only if the hamiltonian dynamics preserves a volume form on
$A^*$.}

It is clear that if we apply this result to the particular case
when $A$ is a Lie algebra we recover Kozlov's theorem.
\end{itemize}

The paper is organized as follows. In Sections 2 and 3, we recall
some definitions and results on Lie algebroids, hamiltonian
dynamics and modular class which will be used in the rest of the
paper. Section 4 contains the main results. In fact, in this
section we discuss the relation between the unimodularity of a Lie
algebroid $A$ and the existence of invariant volumes for the
hamiltonian dynamics on $A^*$ (see Theorem \ref{maintheorem} and
Corollaries \ref{Corolario1} and \ref{Corolario2}). In Section 5,
we apply the results of Section 4 to several hamiltonian systems
on different examples of Lie algebroids. The paper ends with our
conclusions and a description of future research directions.

\section{Lie algebroids}\label{section2}

\subsection{Definitions and notation}
Let $\tau_{A}: A \to Q$ be a vector bundle of rank $n$ over a
manifold $Q$ of dimension $m$. Denote by $\Gamma(\tau_{A})$ the
space of sections of the vector bundle $\tau_{A}: A \to Q$.

\begin{definition}
A Lie algebroid structure on the vector bundle $\tau_{A}: A \to Q$
is a Lie bracket $\lcf \cdot , \cdot \rcf_{A}: \Gamma(\tau_{A})
\times \Gamma(\tau_{A}) \to \Gamma(\tau_{A})$ on the space
$\Gamma(\tau_{A})$ and a vector bundle morphism $\rho_{A}:A \to
TQ$, the anchor map, such that if we also denote by $\rho_{A}:
\Gamma(\tau_{A}) \to {\frak X}(Q)$ the corresponding morphism of
$C^{\infty}(Q)$-modules then
\[
\lcf X, fY \rcf_{A} = f \lcf X, Y\rcf_{A} + \rho_{A}(X)(f)Y,
\mbox{ for } X, Y \in \Gamma(\tau_{A}) \mbox{ and } f \in
C^{\infty}(Q).
\]

\end{definition}
If $(\lcf \cdot , \cdot \rcf_{A}, \rho_{A})$ is a Lie algebroid
structure on the vector bundle $\tau_{A}: A \to Q$ it follows that
\begin{equation}\label{Homomor}
\rho_{A}\lcf X, Y \rcf_{A} = [\rho_{A}(X), \rho_{A}(Y)], \mbox{
for } X, Y \in \Gamma(\tau_{A})
\end{equation}
Moreover, if $(q^i)$ are local coordinates in an open subset $U$
of $Q$ and $\{e_{\alpha}\}$ is a basis of sections of the vector
bundle $\tau_{A}^{-1}(U) \to U$, we have that
\[
\lcf e_{\alpha}, e_{\beta}\rcf_{A} =
C_{\alpha\beta}^{\gamma}e_{\gamma}, \makebox[.6cm]{}
\rho_{A}e_{\alpha} = \rho^i_{\alpha}\displaystyle
\frac{\partial}{\partial q^i},
\]
with $C_{\alpha\beta}^{\gamma}, \rho^{i}_{\alpha} \in
C^{\infty}(U)$. The functions $C_{\alpha\beta}^{\gamma},
\rho^{i}_{\alpha}$ are called {\em the local structure functions}
of the Lie algebroid with respect to the local coordinates $(q^i)$
and the basis $\{e_{\alpha}\}$. Using (\ref{Homomor}) and the fact
that $\lcf \cdot , \cdot \rcf_{A}$ is a Lie bracket, we deduce
that
\begin{equation}\label{estruc1}
\rho_\alpha^j\frac{\partial \rho_\beta^i}{\partial q^j}
-\rho_\beta^j\frac{\partial \rho_\alpha^i}{\partial q^j}=
\rho_\gamma^iC_{\alpha\beta}^\gamma \end{equation}

and

\begin{equation}\label{estruc2}
\sum_{cyclic(\alpha,\beta,\gamma)}[\rho_{\alpha}^i\frac{\partial
C_{\beta\gamma}^\nu}{\partial q^i} + C_{\alpha\mu}^\nu
C_{\beta\gamma}^\mu]=0.
\end{equation}
These equations are called {\em the local structure equations} of
the Lie algebroid with respect to the local coordinates $(q^i)$
and the basis $\{e_{\alpha}\}$.

\subsection{Examples}\label{Sec2.2}
Next, we will exhibit some examples of Lie algebroids.

\medskip

\noindent {\bf The Atiyah algebroid associated with a principal
$G$-bundle}. Let $p:Q \to Q/G$ be a principal $G$-bundle. Then, we
may consider the tangent lift of the principal action of $G$ on
$Q$ and, it is well-known that, the space of orbits of this
action, $TQ/G$, is a real vector bundle over $Q/G$. The vector
bundle projection $\tau_{TQ/G}: TQ/G \to Q/G$ is given by
\[
\tau_{TQ/G}[v_{q}] = p(q), \makebox[.6cm]{} \forall v_q \in T_qQ.
\]
Furthermore, the space of sections $\Gamma(\tau_{TQ/G})$ may be
identified with the set of $G$-invariant vector fields on $Q$.
Thus, using that the Lie bracket of two $G$-invariant vector
fields on $Q$ also is $G$-invariant, we may define, in a natural
way, a Lie bracket on the space $\Gamma(\tau_{TQ/G})$
\[
\lcf \cdot , \cdot \rcf_{TQ/G}: \Gamma(\tau_{TQ/G}) \times
\Gamma(\tau_{TQ/G}) \to \Gamma(\tau_{TQ/G}).
\]
On the other hand, the anchor map $\rho_{TQ/G}: TQ/G \to T(Q/G)$
is given by
\[
\rho_{TQ/G}[v_{q}] = (T_qp)(v_q), \mbox{ for } v_{q} \in T_qQ,
\]
where $Tp: TQ \to T(Q/G)$ is the tangent map to the principal
bundle projection $p: Q \to Q/G$.

The resultant Lie algebroid $(TQ/G, \lcf \cdot , \rcf _{TQ/G},
\rho_{TQ/G})$ is called {\em the Atiyah algebroid} associated with
the principal $G$-bundle $p:Q \to Q/G$ (see \cite{Mac}).

Note that if the Lie group $G$ is trivial and $p = id$ then the
Atiyah algebroid may be identified with the standard Lie algebroid
$\tau_{TQ}: TQ \to Q$.

Another interesting particular case is when the manifold $Q$ is
the Lie group $G$. In this case, using that $TG$ is diffeomorphic
to the product manifold $G \times {\frak g}$ (${\frak g}$ being
the Lie algebra of $G$), we deduce that the Atiyah algebroid
$TG/G$ may be identified with ${\frak g}$ (as a Lie algebroid over
a single point).

Finally, if the principal $G$-bundle is trivial, that is,
\[
Q = G \times M, \makebox[.75cm]{} Q/G = M,
\]
$p$ is the canonical projection on the second factor and the
action of $G$ on $Q = G \times M$ is defined by
\[
g(g', x) = (gg', x), \mbox{ for } g, g' \in G \mbox{ and } x\in M,
\]
then it is easy to prove that the quotient vector bundle
$\tau_{TQ/G}: TQ/G \to Q/G$ may be identified with the vector
bundle $\tau_{{\frak g} \times TM}: {\frak g} \times TM \to M$.
Under this identification, the Lie algebroid structure $(\lcf
\cdot , \cdot \rcf_{{\frak g} \times TM}, \rho_{{\frak g} \times
TM})$ on $\tau_{{\frak g}\times TM}: {\frak g}\times TM \to M$ is
given by
\[
\lcf (\xi, X), (\eta, Y)\rcf_{{\frak g} \times TM} = ([\xi,
\eta]_{{\frak g}}, [X, Y]), \makebox[.5cm]{} \rho_{{\frak g}
\times TM}(\xi, X) = X,
\]
for $\xi, \eta \in {\frak g}$ and $X, Y \in {\frak X}(M)$,
$[\cdot, \cdot ]_{{\frak g}}$ being the Lie bracket on ${\frak
g}$.

\medskip

\noindent{\bf The Lie algebroid associated with a left
infinitesimal action}. Let ${\frak g}$ be a real Lie algebra of
finite dimension and $\Phi: {\frak g} \to {\frak X}(Q)$ be a left
infinitesimal action of ${\frak g}$ on $Q$, that is, $\Phi$ is a
$\R$-linear map and
\[
\Phi([\xi, \eta]_{{\frak g}}) = -[\Phi(\xi), \Phi(\eta)], \mbox{
for } \xi, \eta \in {\frak g}.
\]
Then, the trivial vector bundle $\tau_{A}: A = {\frak g} \times Q
\to Q$ is a Lie algebroid. In fact, the anchor map $\rho_{A}: A =
{\frak g} \times Q \to TQ$ is given by
\[
\rho_{A}(\xi, q) = -\Phi(\xi)(q), \mbox{ for } (\xi, q) \in A=
{\frak g} \times Q.
\]
On the other hand, it is clear that the space of sections
$\Gamma(\tau_{A})$ may be identified with the set $C^{\infty}(Q,
{\frak g})$ of smooth functions from $Q$ on ${\frak g}$. Under
this identification, the Lie bracket $\lcf \cdot , \cdot \rcf_{A}:
\Gamma(\tau_{A}) \times \Gamma(\tau_{A}) \to \Gamma(\tau_{A})$ is
defined by
\[
\lcf \varphi, \psi \rcf_{A}(q) = [\varphi(q), \psi(q)]_{{\frak g}}
- \Phi(\varphi(q))(q)(\psi) + \Phi(\psi(q))(q)(\varphi),
\]
for $\varphi, \psi \in C^{\infty}(Q, {\frak g})$ and $q \in Q$.

The resultant Lie algebroid is called {\em the Lie algebroid
associated with the left infinitesimal action $\Phi$} (see
\cite{HiMa}).

\subsection{The differential associated with a Lie algebroid}
Let $(\lcf \cdot , \cdot \rcf_{A}, \rho_{A})$ be a Lie algebroid
structure on a vector bundle $\tau_{A}: A \to Q$ of rank $n$ and
$\tau_{A^*}: A^{*} \to Q$ be the dual vector bundle to $\tau_{A}:
A \to Q$. Then, one may introduce the corresponding {\em
differential} as a $\R$-linear map $d^A:
\Gamma(\Lambda^k\tau_{A^*}) \to \Gamma(\Lambda^{k+1}\tau_{A^*})$,
$k \in \{0, \dots , n-1\}$, given by
\begin{equation}\label{deD}
\begin{array}{rcl}
(d^{A}\alpha)(X_{0}, X_{1}, \dots, X_{k}) &=& \displaystyle
\sum_{i=0}^{k} (-1)^{i} \rho_{A}(X_{i})(\alpha(X_{0}, \dots,
\hat{X}_{i}, \dots, X_{k})) \\[5pt]
&& + \displaystyle \sum_{i < j} (-1)^{i+j} \alpha (\lcf X_{i},
X_{j} \rcf_{A}, X_{0}, X_{1}, \dots, \hat{X}_{i}, \dots,
\hat{X}_{j}, \dots, X_{k})
\end{array}
\end{equation}
for $\alpha \in \Gamma(\Lambda^k \tau_{A^*})$ and $X_{0}, X_{1},
\dots , X_{k} \in \Gamma(\tau_{A})$.

We have that
\[
d^A(\alpha \wedge \beta) = d^A\alpha \wedge \beta + (-1)^k \alpha
\wedge d^A\beta, \makebox[.75cm]{} (d^A)^2 = 0,
\]
for $\alpha \in \Gamma(\Lambda^k \tau_{A^*})$ and $\beta \in
\Gamma(\Lambda^r \tau_{A^*})$. Thus, we can consider the
corresponding cohomology groups $H^kA = \displaystyle\frac{Ker \;
d^A}{Im \; d^A}$, for $k \in \{0, 1, \dots , n\}$ (see
\cite{Mac}).

Moreover, if $(q^i)$ are local coordinates on an open subset $U$
of $Q$ and $\{e_{\alpha}\}$ is a basis of sections of the vector
bundle $\tau_{A}^{-1}(U) \to U$, it follows that
\[
d^Aq^i= \rho^i_{\alpha}e^{\alpha}, \makebox[.75cm]{} d^Ae^{\gamma}
= \displaystyle -\frac{1}{2}
C_{\alpha\beta}^{\gamma}e^{\alpha}\wedge e^{\beta},
\]
where $\{e^{\alpha}\}$ is the dual basis of $\{e_{\alpha}\}$ and
$\rho^i_{\alpha}$, $C_{\alpha\beta}^{\gamma}$ are the local
structure functions of $A$.

On the other hand, if $X \in \Gamma(\tau_{A})$ one may define {\em
the Lie derivative operator}
\[
{\mathcal L}_{X}^{A}: \Gamma(\Lambda^k\tau_{A^*}) \to
\Gamma(\Lambda^k\tau_{A^*})
\]
as follows
\[
{\mathcal L}_{X}^A = i_{X} \circ d^A + d^A \circ i_{X},
\]
where $i_{X}: \Gamma(\Lambda^k\tau_{A^*}) \to
\Gamma(\Lambda^{k-1}\tau_{A^*})$ is the contraction by $X$, that
is,
\[
(i_{X}\alpha)(X_{1}, \dots , X_{k-1}) = \alpha(X, X_{1}, \dots ,
X_{k-1}),
\]
for $X_{1}, \dots , X_{k-1} \in \Gamma(\tau_{A})$.

\subsection{The linear Poisson structure associated with a Lie
algebroid} Let $(\lcf \cdot , \cdot \rcf_{A}, \rho_{A})$ be a Lie
algebroid structure on the vector bundle $\tau_{A}: A \to Q$.
Then, on may define a $\R$-linear bracket of functions
\[
\{\cdot , \cdot \}_{A^*}: C^{\infty}(A^*) \times C^{\infty}(A^*)
\to C^{\infty}(A^*)
\]
which is characterized by the following conditions
\[
\{\hat{X}, \hat{Y}\}_{A^*} = -\widehat{\lcf X, Y\rcf}_{A}, \; \;
\; \{f \circ \tau_{A^*}, \hat{X}\}_{A^*} = \rho_{A}(X)(f) \circ
\tau_{A^*},
\]
and
\[
\{f \circ \tau_{A^*}, g \circ \tau_{A^*}\}_{A^*} = 0,
\]
for $X, Y \in \Gamma(\tau_{A})$ and $f, g \in C^{\infty}(Q)$. Note
that if $X \in \Gamma(\tau_{A})$ then $\hat{X}: A^* \to \R$ is the
linear function on $A^*$ given by
\[
\hat{X}(\alpha) = \alpha(X(\tau_{A^*}(\alpha))), \mbox{ for }
\alpha \in A^*.
\]
$\{\cdot , \cdot \}_{A^*}$ is a {\em linear Poisson structure} on
$A^*$, that is,
\begin{enumerate}
\item
$\{\cdot, \cdot \}_{A^*}$ is a Lie bracket on $C^{\infty}(A^*)$,

\item
$\{\cdot, \cdot\}_{A^*}$ satisfies the Leibniz rule
\[
\{\varphi\varphi', \psi\}_{A^*} = \varphi\{\varphi', \psi\}_{A^*}
+ \varphi'\{\varphi, \psi\}_{A^*},
\]
for $\varphi, \varphi', \psi \in C^{\infty}(A^*)$ and

\item
$\{\cdot, \cdot \}_{A^*}$ is linear or, in other words, the
bracket of two linear functions on $A^*$ is a linear function.

\end{enumerate}

We will denote by $\Pi_{A^*}$ the corresponding linear Poisson
$2$-vector on $A^*$ which is characterized by the following
condition
\[
\Pi_{A^*}(d\varphi, d\psi) = \{\varphi, \psi\}_{A^*}
\]
(for more details, see \cite{Co}).

If $(q^i)$ are local coordinates on an open subset $U$ of $Q$ and
$\{e_{\alpha}\}$ is a basis of sections of the vector bundle
$\tau_{A}^{-1}(U) \to U$ we have that
\begin{equation}\label{PiAstar}
\Pi_{A^*} = \displaystyle \rho^i_{\alpha} \frac{\partial}{\partial
q^i} \wedge \frac{\partial}{\partial p_{\alpha}} - \frac{1}{2}
C_{\alpha\beta}^{\gamma}p_{\gamma} \frac{\partial}{\partial
p_{\alpha}} \wedge \frac{\partial}{\partial p_{\beta}}
\end{equation}
where $(q^i, p_{\alpha})$ are the corresponding local coordinates
on $A^*$ and $\rho^i_{\alpha}, C_{\alpha\beta}^{\gamma}$ are the
local structure functions of $A$.

Next, we will describe the linear Poisson structure on the dual
vector bundle of some examples of Lie algebroids.

\begin{itemize}
\item
If $A$ is the standard Lie algebroid $\tau_{TQ}: TQ \to Q$ then
the local structure functions of $\tau_{TQ}: TQ \to Q$ with
respect to the local coordinates $(q^i)$ on $Q$ and the local
basis $\{\frac{\partial}{\partial q^i}\}$ are
\[
\rho^i_{\alpha} = \delta^{i}_{\alpha}, \makebox[.75cm]{}
C_{\alpha\beta}^{\gamma} = 0.
\]
Thus, $\{\cdot, \cdot\}_{T^*Q}$ is the linear Poisson bracket
induced by {\em the canonical symplectic structure} of $T^*Q$ (for
the definition of this structure see, for instance, \cite{AbMa}).

\item
If $A$ is a real Lie algebra ${\frak g}$ of finite dimension then
$\rho_{A} = \rho_{{\frak g}}$ is the zero map and $\{\cdot,
\cdot\}_{\frak g^*}$ is {\em the Lie-Poisson structure} on ${\frak
g}^*$ induced by the Lie algebra ${\frak g}$ (for the definition
of this structure see, for instance, \cite{MaRa}).

\item
If $p: G\times M \to M$ is a trivial principal $G$-bundle then the
dual vector bundle of the Atiyah algebroid $\tau_{{\frak g} \times
TM}: {\frak g} \times TM \to M$ is the vector bundle $\tau_{{\frak
g}^* \times T^*M}: {\frak g}^* \times T^*M \to M$ and the linear
Poisson structure on ${\frak g}^* \times T^*M$ is just the product
of the Lie-Poisson structure on ${\frak g}^*$ and the canonical
symplectic structure of $T^*M$.

\item
If $\Phi: {\frak g} \to {\frak X}(Q)$ is a left inifinitesimal
action of a Lie algebra ${\frak g}$ on a manifold $Q$ then the
dual vector bundle of the corresponding action Lie algebroid is
the vector bundle $\tau_{Q \times {\frak g}^*}: Q \times {\frak
g}^* \to Q$ and the linear Poisson structure on $Q \times {\frak
g}^*$ is characterized by the following conditions
\[
\{\hat{\xi}, \hat{\eta}\}_{Q\times {\frak g}^*} = -\widehat{[\xi,
\eta]}_{\frak g}, \makebox[.5cm]{} \{f \circ \tau_{Q \times {\frak
g}^*}, \hat{\eta}\}_{Q\times {\frak g}^*} = \Phi(\eta)(f) \circ
\tau_{Q \times {\frak g}^*},
\]
and
\[
\{f \circ \tau_{Q \times {\frak g}^*}, g \circ \tau_{Q \times
{\frak g}^*}\}_{Q \times {\frak g}^*} = 0,
\]
for $f, g \in C^{\infty}(Q)$ and $\xi, \eta \in {\frak g}$.

\end{itemize}

\section{Hamiltonian dynamics on Lie algebroids and modular
sections}

\subsection{Hamiltonian dynamics on Lie algebroids}
Let $(\lcf \cdot, \cdot \rcf_{A}, \rho_{A})$ be a Lie algebroid
structure on a vector bundle $\tau_{A}: A \to Q$ and $\Pi_{A^*}$
be the corresponding linear Poisson $2$-vector on $A^*$.

If $H: A^* \to \R$ is a hamiltonian function on $A^*$ then one may
consider the hamiltonian vector field ${\mathcal
H}_{H}^{\Pi_{A^*}}$ of $H$ with respect to the Poisson structure
$\Pi_{A^*}$, that is,
\begin{equation}\label{Hamiltonian} {\mathcal
H}_{H}^{\Pi_{A^*}}(F) = \{F, H\}_{A^*} = \Pi_{A^*}(dF, dH), \mbox{
for } F \in C^{\infty}(A^*).
\end{equation}
The solutions of {\em the Hamilton equations} for $H$ are just the
integral curves of ${\mathcal H}_{H}^{\Pi_{A^*}}$.

From (\ref{PiAstar}) and (\ref{Hamiltonian}), we deduce that the
local expression of ${\mathcal H}_{H}^{\Pi_{A^*}}$ is
\begin{equation}\label{HamH}
{\mathcal H}_{H}^{\Pi_{A^*}} = \displaystyle \frac{\partial
H}{\partial p_{\alpha}}\rho^i_{\alpha} \frac{\partial}{\partial
q^i} - (\frac{\partial H}{\partial q^i}\rho^i_{\alpha} +
\frac{\partial H}{\partial
p_{\beta}}C_{\alpha\beta}^{\gamma}p_{\gamma})\frac{\partial}{\partial
p_{\alpha}}.
\end{equation}
Thus, the Hamilton equations are
\[
\displaystyle \frac{dq^i}{dt} = \frac{\partial H}{\partial
p_{\alpha}}\rho^i_{\alpha}, \makebox[.5cm]{}
\frac{dp_{\alpha}}{dt} = - (\frac{\partial H}{\partial
q^i}\rho^i_{\alpha} + \frac{\partial H}{\partial
p_{\beta}}C_{\alpha\beta}^{\gamma}p_{\gamma})
\]
(for more details, see \cite{LeMaMa}).

If $A$ is the standard Lie algebroid $\tau_{TQ}: TQ \to Q$ the
Hamilton equations for $H: T^*Q \to \R$ are
\[
\displaystyle \frac{dq^i}{dt} = \frac{\partial H}{\partial p_{i}},
\makebox[.5cm]{} \frac{dp_{i}}{dt} = - \frac{\partial H}{\partial
q^i},
\]
that is, {\em the standard Hamilton equations} (see, for instance,
\cite{AbMa}).

For a real Lie algebra ${\frak g}$ of finite dimension we have
that the Hamilton equations for $H: {\frak g}^* \to \R$ are
\begin{equation}\label{Lie-Poisson}
\displaystyle \frac{dp_{\alpha}}{dt} = - \frac{\partial
H}{\partial p_{\beta}}c_{\alpha\beta}^{\gamma}p_{\gamma}
\end{equation}
where $c_{\alpha\beta}^{\gamma}$ are the structure constants of
${\frak g}$ with respect to a basis $\{e_{\alpha}\}$. Note that
(\ref{Lie-Poisson}) are just {\em the Lie-Poisson equations} for
$H$ (see, for instance, \cite{MaRa}).

If $H: {\frak g}^* \times T^*M \to \R$ is a hamiltonian function
on the dual bundle of the Atiyah algebroid $\tau_{{\frak g} \times
TM}: {\frak g} \times TM \to M$ then the Hamilton equations for
$H$ are
\begin{equation}\label{Ham-Poincare}
\displaystyle \frac{dq^i}{dt} = \frac{\partial H}{\partial p_{i}},
\makebox[.4cm]{} \frac{dp_{i}}{dt} = - \frac{\partial H}{\partial
q^i}, \makebox[.4cm]{} \frac{dp_{\alpha}}{dt} = - \frac{\partial
H}{\partial p_{\beta}}c_{\alpha\beta}^{\gamma}p_{\gamma},
\end{equation}
where $(p_{\alpha}, q^i, p_{i})$ are local coordinates on ${\frak
g}^* \times T^*M$ and $c_{\alpha\beta}^{\gamma}$ are the structure
constants of ${\frak g}$ with respect to a basis of ${\frak g}$.
(\ref{Ham-Poincare}) are just {\em the Hamilton-Poincar\'e
equations} for $H$ (see, for instance, \cite{LeMaMa}).

\subsection{Modular class of a Lie
algebroid}\label{modularclass}\label{sec3.2}
 Let $(\lcf\cdot, \cdot \rcf_{A}, \rho_{A})$ be a Lie algebroid
 structure on a vector bundle $\tau_{A}: A \to Q$ of rank $n$ with
 base manifold $Q$ of dimension $m$.

We will assume that $Q$ and the vector bundle $\tau_{A}: A \to Q$
are orientable.

If $\alpha$ is a section of $\tau_{A^*}: A^* \to Q$ we will denote
by $\alpha^{\bf v} \in {\frak X}(A^*)$ {\em the vertical lift} of
$\alpha$. We recall that
\[
\alpha^{\bf v}(\gamma_{q}) = \displaystyle
\frac{d}{dt}_{|t=0}(\gamma_{q} + t\alpha(q)), \mbox{ for }
\gamma_{q} \in A^*_q.
\]

\begin{lemma}\label{Voldual}
Let $\nu$ be a volume form on $Q$ and $\Lambda$ be a section of
the vector bundle $\tau_{\Lambda^nA}: \Lambda^nA \to Q$ such that
$\Lambda(q) \neq 0$, for all $q\in Q$. Then, there exists a unique
volume form $\nu \wedge \Lambda$ on the dual bundle $A^*$ to the
vector bundle $\tau_{A}: A \to Q$ such that
\begin{equation}\label{Defvol}
\nu \wedge \Lambda(\tilde{Z}_{1}, \dots , \tilde{Z}_{m},
\alpha_{1}^{\bf v}, \dots , \alpha_{n}^{\bf v}) = \nu(Z_{1}, \dots
, Z_{m}) \Lambda(\alpha_{1}, \dots , \alpha_{n}),
\end{equation}
for $\alpha_{1}, \dots , \alpha_{n} \in \Gamma(\tau_{A^*})$ and
$\tilde{Z}_{1}, \dots , \tilde{Z}_{m}$ vector fields on $A^*$
which are $\tau_{A^*}$-projectable on the vector fields $Z_{1},
\dots , Z_{m}$ on $Q$.
\end{lemma}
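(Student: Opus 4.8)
The plan is to reduce the statement to a pointwise question about frames of $T A^*$, prove uniqueness first (where it is essentially immediate), and then construct the form by an explicit local expression that is patched together using that very uniqueness.

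First I would record the structural fact on which everything rests. The total space $A^*$ has dimension $m+n$, so a volume form is a nowhere-vanishing $(m+n)$-form, and such a form is determined pointwise by its value on any single basis of $T_\gamma A^*$. At each $\gamma \in A^*_q$ the vertical subspace $V_\gamma = \ker(T_\gamma\tau_{A^*})$ is $n$-dimensional, and the vertical lifts $\alpha^{\bf v}(\gamma)$ span it; concretely, if $\{\alpha_1,\dots,\alpha_n\}$ is a local basis of $\Gamma(\tau_{A^*})$, then $\{\alpha_1^{\bf v}(\gamma),\dots,\alpha_n^{\bf v}(\gamma)\}$ is a basis of $V_\gamma$. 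On the other hand, if $\{Z_1,\dots,Z_m\}$ is a local frame of $TQ$ and $\tilde Z_1,\dots,\tilde Z_m$ are $\tau_{A^*}$-projectable vector fields over them, their projections span a complement of $V_\gamma$, so
\[
\{\tilde Z_1(\gamma),\dots,\tilde Z_m(\gamma),\alpha_1^{\bf v}(\gamma),\dots,\alpha_n^{\bf v}(\gamma)\}
\]
is a basis of $T_\gamma A^*$. Uniqueness then follows at once: two $(m+n)$-forms satisfying (\ref{Defvol}) agree on this frame at every point and hence coincide. Note that the right-hand side of (\ref{Defvol}) depends only on the pointwise data $Z_i(q)$ and $\alpha_j(q)$, so the prescribed value is unambiguous.

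For existence I would pass to adapted local coordinates $(q^i,p_\alpha)$ on $A^*$ coming from coordinates $(q^i)$ on $Q$ and a local basis $\{e_\alpha\}$ of sections of $A$ with dual basis $\{e^\alpha\}$. Writing $\nu = f\,dq^1\wedge\dots\wedge dq^m$ and $\Lambda = g\,e_1\wedge\dots\wedge e_n$ with $f,g$ nowhere vanishing, I would define
\[
\nu\wedge\Lambda \;=\; \big((fg)\circ\tau_{A^*}\big)\; dq^1\wedge\dots\wedge dq^m\wedge dp_1\wedge\dots\wedge dp_n .
\]
To verify (\ref{Defvol}) I would use that the vertical lift of $\alpha=\alpha_\beta e^\beta$ is $\alpha^{\bf v}=(\alpha_\beta\circ\tau_{A^*})\,\partial/\partial p_\beta$, and that a projectable $\tilde Z_i$ over $Z_i=Z_i^k\,\partial/\partial q^k$ satisfies $dq^k(\tilde Z_i)=Z_i^k\circ\tau_{A^*}$ while $dq^k(\alpha_j^{\bf v})=0$. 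Evaluating the displayed form on $(\tilde Z_1,\dots,\tilde Z_m,\alpha_1^{\bf v},\dots,\alpha_n^{\bf v})$ produces $(fg)$ times a determinant whose matrix is block lower-triangular, with diagonal blocks $(Z_i^k)$ and $\big((\alpha_j)_\beta\big)$; expanding gives $\big(f\det(Z_i^k)\big)\big(g\det((\alpha_j)_\beta)\big)=\nu(Z_1,\dots,Z_m)\,\Lambda(\alpha_1,\dots,\alpha_n)$, which is exactly (\ref{Defvol}). Since any solution of (\ref{Defvol}) is unique, these local expressions agree on overlaps and glue to a global form, which is a volume form because $f,g\neq0$.

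The one point that genuinely needs care — and which the block-triangular determinant settles — is \emph{consistency}: the right-hand side of (\ref{Defvol}) is insensitive to the choice of projectable lifts $\tilde Z_i$ over the fixed $Z_i$, even though two such lifts differ by vertical fields. This is precisely why the off-diagonal (vertical) block does not contribute to the determinant. Equivalently, and more intrinsically, $\tau_{A^*}^{*}\nu$ is annihilated by contraction with any vertical vector, so $\tau_{A^*}^{*}\nu\wedge\beta$ depends only on the restriction of $\beta$ to the vertical bundle, namely on $\Lambda$; this gives a coordinate-free construction yielding the same form. I expect this independence-of-lift statement to be the only real obstacle, the remaining steps being routine multilinear algebra.
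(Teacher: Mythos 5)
Your proposal is correct and takes essentially the same route as the paper: both prove existence by writing the form in adapted local coordinates $(q^i,p_\alpha)$ as the (pulled-back) coefficient of $\nu$ and $\Lambda$ times $dq^1\wedge\dots\wedge dq^m\wedge dp_1\wedge\dots\wedge dp_n$, verify (\ref{Defvol}) by a direct computation, and obtain the global form by gluing via uniqueness. The only differences are cosmetic: the paper normalizes the local basis so that $\Lambda=e_1\wedge\dots\wedge e_n$ and leaves as ``a direct computation'' the block-triangular determinant evaluation and the pointwise-frame uniqueness argument that you spell out explicitly.
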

\begin{proof}
We can choose local coordinates $(q^i)$ on an open subset $U$ of
$Q$ and a basis of sections $\{e_{\alpha}\}$ of the vector bundle
$\tau_{A}^{-1}(U) \to U$ such that on $U$
\[
\nu = e^{\sigma_{U}^{\nu}} dq^1 \wedge \dots \wedge dq^m,
\makebox[.75cm]{} \Lambda = e_{1} \wedge \dots \wedge e_{n}, \; \;
\mbox{ with } \sigma_{U}^{\nu} \in C^{\infty}(U).
\]
If $(q^i, p_{\alpha})$ are the corresponding local coordinates on
$A^*$ then on the open subset $\tau_{A^*}^{-1}(U)$ of $A^*$ we
consider the volume form
\[
(\nu \wedge \Lambda)_{\tau_{A^*}^{-1}(U)} = e^{\sigma_{U}^{\nu}}
dq^1 \wedge \dots \wedge dq^m \wedge dp_{1} \wedge \dots \wedge
dp_{n}.
\]
A direct computation proves that $(\nu \wedge
\Lambda)_{\tau_{A^*}^{-1}(U)}$ satisfies (\ref{Defvol}).

On the other hand, if $\Phi$ is a volume form on
$\tau_{A^*}^{-1}(U)$ such that (\ref{Defvol}) holds then it
follows that
\[
\Phi = (\nu \wedge \Lambda)_{\tau_{A^*}^{-1}(U)}.
\]
This proves the result.
\end{proof}

Suppose that $\nu$ is a volume form on $Q$ and that $\Lambda \in
\Gamma(\tau_{\Lambda^{n}A})$ satisfies $\Lambda(q) \neq 0$, for
all $q\in Q$. Then, we can consider the modular vector field
${\mathcal M}^{\nu \wedge \Lambda} \in {\frak X}(A^*)$ of the
linear Poisson structure on $A^*$ with respect to the volume form
$\nu \wedge \Lambda$ on $A^*$. ${\mathcal M}^{\nu \wedge \Lambda}$
is given by
\begin{equation}\label{MnuLam}
{\mathcal M}^{\nu \wedge \Lambda}(H) = div_{(\nu \wedge
\Lambda)}({\mathcal H}_{H}^{\Pi_{A^*}}), \mbox{ for } H \in
C^{\infty}(A^*),
\end{equation}
where $div_{(\nu \wedge \Lambda)}({\mathcal H}_{H}^{\Pi_{A^*}})$
is the divergence of the hamiltonian vector field ${\mathcal
H}_{H}^{\Pi_{A^*}}$ of $H$ with respect to the volume form $\nu
\wedge \Lambda$ (see \cite{We}).

We have that ${\mathcal M}^{(\nu \wedge \Lambda)}$ is the vertical
lift of a section ${\mathcal M}^{(\nu, \Lambda)} \in
\Gamma(\tau_{A^*})$, that is,
\begin{equation}\label{CamMod-SecMod}
{\mathcal M}^{(\nu \wedge \Lambda)} = ({\mathcal M}^{(\nu,
\Lambda)})^{\bf v}.
\end{equation}
${\mathcal M}^{(\nu, \Lambda)}$ is {\em the modular section} of
$A$ with respect to $\nu$ and $\Lambda$ (see \cite{EvLuWe,We}).

Denote by $\Omega_{\Lambda}$ the section of the vector bundle
$\tau_{\Lambda^nA^*}: \Lambda^nA^* \to Q$ characterized by
\[
\Omega_{\Lambda}(X_{1}, \dots , X_{n})\Lambda = X_{1} \wedge \dots
\wedge X_{n}, \mbox{ for } X_{1}, \dots , X_{n} \in
\Gamma(\tau_{A}).
\]
It follows that $\Omega_{\Lambda}(q) \neq 0$, for all $q \in Q$.
Moreover,
\begin{equation}\label{DefsecMo}
{\mathcal M}^{(\nu, \Lambda)}(X) = div_{\nu}(\rho_{A}(X)) -
div_{\Omega_{\Lambda}}X,
\end{equation}
where $div_{\Omega_{\Lambda}}X$ is the divergence of $X$ with
respect to $\Omega_{\Lambda}$, that is,
\[
{\mathcal L}_{X}^{A}\Omega_{\Lambda} =
(div_{\Omega_{\Lambda}}X)\Omega_{\Lambda}
\]
(for more details, see \cite{EvLuWe,We}).

If $(q^i)$ are local coordinates on an open subset $U$ of $Q$ and
$\{e_{\alpha}\}$ is a basis of sections of the vector bundle
$\tau_{A}^{-1}(U) \to U$ such that
\[
\nu = e^{\sigma_{U}^{\nu}}dq^1 \wedge \dots \wedge dq^m,
\makebox[.75cm]{} \Lambda = e_{1}\wedge \dots \wedge e_{n}
\]
then, from (\ref{DefsecMo}), we deduce that
\begin{equation}\label{Secmolo}
{\mathcal M}^{(\nu, \Lambda)} = (C^{\beta}_{\alpha\beta} +
\displaystyle \frac{\partial \rho^i_{\alpha}}{\partial q^i} +
\rho^i_{\alpha}\frac{\partial \sigma_{U}^{\nu}}{\partial
q^i})e^{\alpha}.
\end{equation}
We also have the following result
\begin{proposition}\cite{EvLuWe}
\begin{enumerate}
\item
${\mathcal M}^{(\nu, \Lambda)}$ is a $1$-cocycle, that is,
\[
d^A{\mathcal M}^{(\nu, \Lambda)} = 0.
\]
\item
The cohomology class of ${\mathcal M}^{(\nu, \Lambda)}$ doesn't
depend on the chosen volumes $\nu$ and $\Lambda$. In fact,
\begin{equation}\label{Cohomology}
{\mathcal M}^{(e^{\sigma}\nu, e^{\mu}\Lambda)} = {\mathcal
M}^{(\nu, \Lambda)} + d^A(\sigma + \mu), \mbox{ for } \sigma, \mu
\in C^{\infty}(Q).
\end{equation}
\end{enumerate}
\end{proposition}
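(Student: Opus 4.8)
The plan is to establish (ii) first, since the transformation formula (\ref{Cohomology}) together with $(d^A)^2=0$ provides useful leverage, and then to prove the cocycle condition (i) by a local computation showing that the non-exact part of (\ref{Secmolo}) is $d^A$-closed. Both parts I would base on the intrinsic description (\ref{DefsecMo}), namely $\mathcal{M}^{(\nu,\Lambda)}(X) = \operatorname{div}_\nu(\rho_A(X)) - \operatorname{div}_{\Omega_\Lambda}X$, rather than on the coordinate expression.

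For (ii) I would compute how each of the two divergences responds to rescaling. On the base, for any $Y \in \mathfrak{X}(Q)$ one has $\mathcal{L}_Y(e^\sigma\nu) = e^\sigma(Y(\sigma)+\operatorname{div}_\nu Y)\nu$, hence $\operatorname{div}_{e^\sigma\nu}(Y) = \operatorname{div}_\nu(Y) + Y(\sigma)$; taking $Y=\rho_A(X)$ produces the term $\rho_A(X)(\sigma)$. On the algebroid side, the defining property of $\Omega_\Lambda$ gives $\Omega_{e^\mu\Lambda}=e^{-\mu}\Omega_\Lambda$, and since $\mathcal{L}^A_X$ is a derivation with $\mathcal{L}^A_X f = \rho_A(X)(f)$ on functions, expanding $\mathcal{L}^A_X(e^{-\mu}\Omega_\Lambda)$ and cancelling the common factor $e^{-\mu}\Omega_\Lambda$ yields $\operatorname{div}_{\Omega_{e^\mu\Lambda}}X = \operatorname{div}_{\Omega_\Lambda}X - \rho_A(X)(\mu)$. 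Subtracting, $\mathcal{M}^{(e^\sigma\nu,e^\mu\Lambda)}(X) = \mathcal{M}^{(\nu,\Lambda)}(X) + \rho_A(X)(\sigma+\mu)$, and since $(d^A(\sigma+\mu))(X)=\rho_A(X)(\sigma+\mu)$ by (\ref{deD}) with $k=0$, this is exactly (\ref{Cohomology}). Since any two admissible pairs of volumes differ by such a rescaling, this also shows $d^A\mathcal{M}^{(\nu,\Lambda)}$ is independent of the choices.

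For (i), because $d^A$ is a local operator it suffices to work in a chart with coordinates $(q^i)$, basis $\{e_\alpha\}$ and $\nu = e^{\sigma^\nu_U}dq^1\wedge\cdots\wedge dq^m$, $\Lambda = e_1\wedge\cdots\wedge e_n$. The last summand of (\ref{Secmolo}), namely $\rho^i_\alpha(\partial\sigma^\nu_U/\partial q^i)e^\alpha$, is precisely $d^A\sigma^\nu_U$, so it is annihilated by $d^A$ and may be discarded; one is reduced to proving $d^A(a_\alpha e^\alpha)=0$ with $a_\alpha = C^\beta_{\alpha\beta}+\partial\rho^i_\alpha/\partial q^i$. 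Using $d^A f = \rho^i_\beta(\partial f/\partial q^i)e^\beta$ and $d^Ae^\gamma = -\frac{1}{2} C^\gamma_{\alpha\beta}e^\alpha\wedge e^\beta$ (equivalently, evaluating on $e_\alpha,e_\beta$ through (\ref{deD})), this reduces to the scalar identity $\rho^j_\alpha\,\partial a_\beta/\partial q^j - \rho^j_\beta\,\partial a_\alpha/\partial q^j = C^\gamma_{\alpha\beta}a_\gamma$. I expect this identity to be the main obstacle: it is not formal and must be extracted from the structure equations, by differentiating the anchor-compatibility relation (\ref{estruc1}) with respect to $q^i$ and contracting over $i$, and by contracting the Jacobi identity (\ref{estruc2}) over a pair of indices, after which the two groups of terms combine to cancel. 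Conceptually this closedness reflects the flatness of the canonical $A$-connection on the line bundle $\Lambda^nA\otimes\Lambda^mT^*Q$, of which $\mathcal{M}^{(\nu,\Lambda)}$ is the connection one-form relative to the trivialising section determined by $\Lambda$ and $\nu$; a flat connection on a line bundle has closed connection form, which is (i). The transformation formula (\ref{Cohomology}) then shows the resulting class in $H^1A$ is independent of $\nu$ and $\Lambda$.
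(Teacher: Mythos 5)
The paper itself offers no proof of this proposition: it is stated with the citation \cite{EvLuWe} and used as a black box, so your proposal has to stand on its own. Your part (ii) does: the two rescaling rules $div_{e^{\sigma}\nu}Y = div_{\nu}Y + Y(\sigma)$ and $\Omega_{e^{\mu}\Lambda} = e^{-\mu}\Omega_{\Lambda}$ (hence $div_{\Omega_{e^{\mu}\Lambda}}X = div_{\Omega_{\Lambda}}X - \rho_{A}(X)(\mu)$) are correct, and combined with (\ref{DefsecMo}) and $(d^{A}(\sigma+\mu))(X) = \rho_{A}(X)(\sigma+\mu)$ they give (\ref{Cohomology}) exactly. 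The only cosmetic caveat is that two volume forms inducing opposite orientations differ by $-e^{\sigma}$ rather than $e^{\sigma}$; since divergences are insensitive to an overall sign, the independence statement is unaffected.

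Part (i) is where your write-up stops short of an actual proof: you reduce correctly to the identity $\rho^{j}_{\alpha}\frac{\partial a_{\beta}}{\partial q^{j}} - \rho^{j}_{\beta}\frac{\partial a_{\alpha}}{\partial q^{j}} = C^{\gamma}_{\alpha\beta}a_{\gamma}$ with $a_{\alpha} = C^{\beta}_{\alpha\beta} + \frac{\partial \rho^{i}_{\alpha}}{\partial q^{i}}$, but you only describe how you expect it to follow. For the record, the strategy you indicate does close, and in precisely the way you predict. Differentiating (\ref{estruc1}) with respect to $q^{i}$ and contracting over $i$, the first-derivative terms $\frac{\partial \rho^{j}_{\alpha}}{\partial q^{i}}\frac{\partial \rho^{i}_{\beta}}{\partial q^{j}} - \frac{\partial \rho^{j}_{\beta}}{\partial q^{i}}\frac{\partial \rho^{i}_{\alpha}}{\partial q^{j}}$ cancel after swapping dummy indices, leaving $\rho^{j}_{\alpha}\frac{\partial^{2}\rho^{i}_{\beta}}{\partial q^{j}\partial q^{i}} - \rho^{j}_{\beta}\frac{\partial^{2}\rho^{i}_{\alpha}}{\partial q^{j}\partial q^{i}} = C^{\gamma}_{\alpha\beta}\frac{\partial \rho^{i}_{\gamma}}{\partial q^{i}} + \rho^{i}_{\gamma}\frac{\partial C^{\gamma}_{\alpha\beta}}{\partial q^{i}}$. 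Contracting (\ref{estruc2}) over $\nu = \gamma$, the quadratic terms $C^{\gamma}_{\alpha\mu}C^{\mu}_{\beta\gamma} + C^{\gamma}_{\beta\mu}C^{\mu}_{\gamma\alpha}$ cancel by antisymmetry of the structure functions in their lower indices, leaving $\rho^{i}_{\alpha}\frac{\partial C^{\gamma}_{\beta\gamma}}{\partial q^{i}} - \rho^{i}_{\beta}\frac{\partial C^{\gamma}_{\alpha\gamma}}{\partial q^{i}} = C^{\mu}_{\alpha\beta}C^{\gamma}_{\mu\gamma} - \rho^{i}_{\gamma}\frac{\partial C^{\gamma}_{\alpha\beta}}{\partial q^{i}}$. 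Adding the two displays yields exactly $C^{\gamma}_{\alpha\beta}a_{\gamma}$, as required. Alternatively, your closing remark is not merely heuristic: it is the actual argument of \cite{EvLuWe}. The operator $D_{X}(\Lambda\otimes\nu) = {\mathcal L}^{A}_{X}\Lambda\otimes\nu + \Lambda\otimes{\mathcal L}_{\rho_{A}(X)}\nu$ is a flat $A$-connection on $\Lambda^{n}A\otimes\Lambda^{m}T^{*}Q$ (flatness follows from (\ref{Homomor}) and the Jacobi identity), and ${\mathcal M}^{(\nu,\Lambda)}$ is its connection $1$-form relative to the trivializing section $\Lambda\otimes\nu$; the one point you should make explicit there is ${\mathcal L}^{A}_{X}\Lambda = -\left(div_{\Omega_{\Lambda}}X\right)\Lambda$, which follows by applying ${\mathcal L}^{A}_{X}$ to the pairing $\langle \Omega_{\Lambda},\Lambda\rangle = 1$. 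Either route, once written out, makes your proposal a complete and correct proof of the statement the paper leaves to the literature.
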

The cohomology class of ${\mathcal M}^{(\nu, \Lambda)}$ is called
{\em the modular class} of $A$.

The Lie algebroid $\tau_{A}: A \to Q$ is said to be {\em
unimodular} if its modular class is zero, that is, there exists a
real $C^{\infty}$-function on $Q$, $\sigma: Q \to \R$, such that
\[
{\mathcal M}^{(\nu, \Lambda)} = -d^A\sigma.
\]
If $(q^i)$ are local coordinates on an open subset $U$ of $Q$ and
$\{e_{\alpha}\}$ is a basis of sections of the vector bundle
$\tau_{A}^{-1}(U) \to U$ such that
\[
\nu = e^{\sigma_{U}^{\nu}}dq^1\wedge \dots \wedge dq^m,
\makebox[.5cm]{} \Lambda = e_{1} \wedge \dots \wedge e_{n}
\]
then, using (\ref{Secmolo}), we deduce that ${\mathcal M}^{(\nu,
\Lambda)} = -d^A\sigma$ if and only if
\[
C_{\alpha\beta}^{\beta} + \displaystyle \frac{\partial
\rho^{i}_{\alpha}}{\partial q^i} + \rho^{i}_{\alpha}
\frac{\partial (\sigma_{U}^{\nu} + \sigma)}{\partial q^i} = 0,
\mbox{ for all } \alpha.
\]
Next, we will discuss the unimodularity of some example of Lie
algebroids.

Let $A$ be the standard Lie algebroid $\tau_{TQ}: TQ \to Q$. Then,
the linear Poisson structure on $\Pi_{T^*Q}$ is induced by the
canonical symplectic structure $\Omega_{T^*Q}$. In addition, it is
well-known that the hamiltonian vector fields on $T^*Q$ preserve
the symplectic volume. Thus, the Lie algebroid $\tau_{TQ}: TQ \to
Q$ is unimodular.

On the other hand, if ${\frak g}$ is a real Lie algebra of finite
dimension then, from (\ref{Secmolo}), it follows that the modular
section of ${\frak g}$ with respect to
\[
\Lambda = e_{1} \wedge \dots \wedge e_{n},
\]
$\{e_{\alpha}\}$ being a basis of ${\frak g}$, is just {\em the
modular character} of ${\frak g}$ (see \cite{EvLuWe}). Therefore,
our notion of a unimodular Lie algebra coincides with the
classical definition of a unimodular Lie algebra.

Now, we consider an Atiyah algebroid $\tau_{{\frak g} \times TM}:
{\frak g} \times TM \to M$, with ${\frak g}$ a real Lie algebra of
finite dimension and $M$ a smooth manifold (see Section
\ref{Sec2.2}). Let $\nu$ be a volume form on $M$ and $\{e_{a}\}$
be a basis of ${\frak g}$. Denote by $\chi_{\nu}$ the $m$-vector
on $M$ given by
\[
\chi_{\nu}(\alpha_{1}, \dots , \alpha_{m})\nu = \alpha_{1} \wedge
\dots \wedge \alpha_{m}, \mbox{ for } \alpha_{1}, \dots ,
\alpha_{m} \in \Omega^{1}(M).
\]
Then,
\[
\Lambda = \chi_{\nu} \wedge e_{1} \wedge \dots \wedge e_{n} \in
\Gamma(\Lambda^n({\frak g} \times TM)) \mbox{ and } \Lambda(x)
\neq 0, \mbox{ for all } x\in M.
\]
Moreover, from (\ref{DefsecMo}), it follows that
\[
{\mathcal M}^{(\nu, \Lambda)}(X) = 0, \; \; \; {\mathcal M}^{(\nu,
\Lambda)}(e_{a}) = {\mathcal M}_{\frak g}(e_{a}),
\]
for $X \in {\frak X}(M)$, where ${\mathcal M}_{\frak g}$ is the
modular character of ${\frak g}$. Note that, in this case,
\[
\Omega_{\Lambda} = \nu \wedge e^1 \wedge \dots \wedge e^n,
\]
$\{e^a\}$ being the dual basis to $\{e_{a}\}$.

Thus, the Lie algebroid $\tau_{{\frak g} \times TM}: {\frak g}
\times TM \to M$ is unimodular if and only if the Lie algebra
${\frak g}$ is unimodular.

Finally, suppose that $\Phi: {\frak g} \to {\frak X}(Q)$ is a left
infinitesimal action of ${\frak g}$ on the manifold $Q$ and that
$\tau_{{\frak g} \times Q}: {\frak g} \times Q \to Q$ is the
corresponding action Lie algebroid. Then, $\tau_{{\frak g} \times
Q}: {\frak g} \times Q \to Q$ is unimodular if and only if there
exists $\sigma \in C^{\infty}(Q)$ such that
\[
div_{\nu}\Phi(\xi) - \Phi(\xi)(\sigma) = {\mathcal M}_{\frak
g}(\xi), \mbox{ for } \xi \in {\frak g},
\]
where $\nu$ is a volume form on $Q$ and ${\mathcal M}_{\frak g}
\in {\frak g}^*$ is the modular character of ${\frak g}$. In
particular, if ${\frak g}$ is a unimodular Lie algebra and the
infinitesimal action preserves a volume form on $Q$ then
$\tau_{{\frak g} \times Q}: {\frak g} \times Q \to Q$ is a
unimodular Lie algebroid.

\section{Unimodularity and preservation of volumes}
Let $(\lcf \cdot, \cdot \rcf_{A}, \rho_{A})$ be a Lie algebroid
structure on a vector bundle $\tau_{A}: A \to Q$ of rank $n$ with
base manifold $Q$ of dimension $m$.

In this section we will assume that $Q$ is orientable and that the
vector bundle $\tau_{A}: A \to Q$ also is orientable.

Now, suppose that ${\mathcal G}$ is a bundle metric on $A$ and
that $V: Q \to \R$ is a real $C^{\infty}$-function on $Q$. Then,
we can consider {\em the hamiltonian energy} $H: A^* \to \R$ given
by
\begin{equation}\label{Hamener}
H(\alpha) = \frac{1}{2} {\mathcal G}(\alpha, \alpha) +
V(\tau_{A^*}(\alpha)), \mbox{ for } \alpha \in A^*.
\end{equation}
In other words, $H$ is the sum of the kinetic energy and the
potential energy.

We will denote by $\Lambda^{\mathcal G} \in
\Gamma(\tau_{\Lambda^nA})$ the volume induced by ${\mathcal G}$,
that is, $\Lambda^{\mathcal G}$ is characterized by the following
condition
\[
\Lambda^{\mathcal G}(q)(e_{q}^1, \dots , e_{q}^n) = 1, \mbox{ for
all } q \in Q,
\]
where $\{e_{q}^1, \dots , e_{q}^1\}$ is an orthonormal basis of
$A^*_q$ with positive orientation.

We will fix a volume form $\nu$ on $M$. Then, if $\Phi$ is a
volume form on $A^*$ we may suppose, without the loss of
generality, that
\[
\Phi = e^{\tilde{\sigma}}\nu \wedge \Lambda^{\mathcal G}, \mbox{
with } \tilde{\sigma} \in C^{\infty}(A^*).
\]
The volume $\Phi$ is said to be of {\em basic type} if
$\tilde{\sigma}$ is a basic function, that is, there exists a real
$C^{\infty}$-function $\mu: Q \to \R$ such that
\[
\mu \circ \tau_{A^*} = \tilde{\sigma}.
\]
Note that this definition doesn't depend on the chosen volume form
$\nu$. In fact, one may prove that $\Phi$ is a volume form of
basic type if and only if
\[
{\mathcal L}_{\alpha^{\bf v}}\Phi = 0, \mbox{ for } \alpha \in
\Gamma(\tau_{A^*}),
\]
${\mathcal L}$ being the Lie derivative operator.

Using the function $\tilde{\sigma}$ one may define the following
objects:
\begin{itemize}
\item
A real $C^{\infty}$-function $\sigma: Q \to \R$ given by
\[
\sigma(q) = \tilde{\sigma}(0(q)), \mbox{ for all } q\in Q,
\]
where $0: Q \to A^*$ is the zero section of $\tau_{A^*}: A^* \to
Q$.
\item
The vertical derivative of $\tilde{\sigma}$
\[
\mathbb{F}\tilde{\sigma}: A^* \to A
\]
defined by
\begin{equation}\label{Derver}
\beta_q(\mathbb{F}\tilde{\sigma}(\alpha_q)) = \displaystyle
\frac{d}{dt}_{|t=0} \tilde{\sigma}(\alpha_q + t\beta_q), \mbox{
for } \alpha_q, \beta_q \in A^*_q.
\end{equation}

\end{itemize}
Now, we will prove the main result of this paper.
\begin{theorem}\label{maintheorem}
Let $H: A^* \to \R$ be the hamiltonian energy induced by a bundle
metric ${\mathcal G}$ on $A$ and a potential energy $V: Q \to \R$.
\begin{enumerate}
\item
If $A$ is unimodular then the hamiltonian vector field ${\mathcal
H}_{H}^{\Pi_{A^*}}$ of $H$ with respect to the linear Poisson
structure $\Pi_{A^*}$ preserves a volume form on $A^*$ of basic
type. In fact, if ${\mathcal M}^{(\nu, \Lambda^{\mathcal G})} =
-d^{A}\sigma$, with $\sigma \in C^{\infty}(Q)$ and ${\mathcal
M}^{(\nu, \Lambda^{\mathcal G})}$ the modular section of $A$ with
respect to the volumes $\nu$ and $\Lambda^{\mathcal G}$, we have
that ${\mathcal H}_{H}^{\Pi_{A^*}}$  preserves the volume form on
$A^*$
\[
\Phi = e^{\sigma}\nu \wedge \Lambda^{\mathcal G}.
\]

\item
If the hamiltonian vector field ${\mathcal H}_{H}^{\Pi_{A^*}}$
preserves a volume form $\Phi$ on $A^*$,
\[
\Phi = e^{\tilde{\sigma}}\nu \wedge \Lambda^{\mathcal G},
\]
then
\[
(\flat_{\mathcal G}^{-1}({\mathcal M}^{(e^{\sigma}\nu,
\Lambda^{\mathcal G})}))^{\bf v} \circ 0 =
(T\mathbb{F}\tilde{\sigma} \circ (d^AV)^{\bf v})\circ 0
\]
where ${\mathcal M}^{(e^{\sigma}\nu, \Lambda^{\mathcal G})}$ is
the modular section of $A$ with respect to the volumes
$e^{\sigma}\nu$ and $\Lambda^{\mathcal G}$, ${\bf v}$ is the
vertical lift, $T\mathbb{F}\tilde{\sigma}: TA^* \to TA$ is the
tangent map to $\mathbb{F}\tilde{\sigma}: A^* \to A$ and
$\flat_{\mathcal G}: \Gamma(\tau_{A}) \to \Gamma(\tau_{A^*})$ is
the isomorphism of $C^{\infty}(Q)$-modules induced by ${\mathcal
G}$.
\end{enumerate}
\end{theorem}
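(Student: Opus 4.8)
The plan is to rewrite the preservation of $\Phi$ as a divergence identity and split off the modular contribution. A vector field preserves $\Phi=e^{\tilde\sigma}\,\nu\wedge\Lambda^{\mathcal G}$ exactly when its $\Phi$-divergence vanishes, and since $\operatorname{div}_{e^{f}\Omega}(X)=\operatorname{div}_{\Omega}(X)+X(f)$, I would use (\ref{MnuLam}) and (\ref{CamMod-SecMod}) to write
\[
\operatorname{div}_\Phi(\mathcal H_H^{\Pi_{A^*}})=\operatorname{div}_{\nu\wedge\Lambda^{\mathcal G}}(\mathcal H_H^{\Pi_{A^*}})+\mathcal H_H^{\Pi_{A^*}}(\tilde\sigma)=(\mathcal M^{(\nu,\Lambda^{\mathcal G})})^{\bf v}(H)+\mathcal H_H^{\Pi_{A^*}}(\tilde\sigma).
\]
Hence $\mathcal H_H^{\Pi_{A^*}}$ preserves $\Phi$ if and only if
\[
(\mathcal M^{(\nu,\Lambda^{\mathcal G})})^{\bf v}(H)+\mathcal H_H^{\Pi_{A^*}}(\tilde\sigma)=0. \qquad(\dagger)
\]
Two elementary identities drive everything. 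Differentiating $H=\tfrac12\mathcal G(\cdot,\cdot)+V\circ\tau_{A^*}$ along the fibres gives, for $\beta\in\Gamma(\tau_{A^*})$, that $\beta^{\bf v}(H)=\widehat{\flat_{\mathcal G}^{-1}\beta}$; and the characterizing properties of $\{\cdot,\cdot\}_{A^*}$ together with the Leibniz rule give, for $f\in C^\infty(Q)$, that $\mathcal H_H^{\Pi_{A^*}}(f\circ\tau_{A^*})=\{f\circ\tau_{A^*},H\}_{A^*}=\widehat{\flat_{\mathcal G}^{-1}(d^Af)}$.

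For (i) set $\tilde\sigma=\sigma\circ\tau_{A^*}$ with $\mathcal M^{(\nu,\Lambda^{\mathcal G})}=-d^A\sigma$. The first identity gives $(\mathcal M^{(\nu,\Lambda^{\mathcal G})})^{\bf v}(H)=-\widehat{\flat_{\mathcal G}^{-1}(d^A\sigma)}$ and the second gives $\mathcal H_H^{\Pi_{A^*}}(\sigma\circ\tau_{A^*})=\widehat{\flat_{\mathcal G}^{-1}(d^A\sigma)}$; these cancel, so $(\dagger)$ holds and $\Phi=e^{\sigma}\nu\wedge\Lambda^{\mathcal G}$ is invariant. This is the easy half and is entirely frame-free.

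For (ii) I would pass to local coordinates $(q^i,p_\alpha)$ adapted to a positively oriented \emph{orthonormal} local frame $\{e_\alpha\}$ of $A$, so that $\Lambda^{\mathcal G}=e_1\wedge\cdots\wedge e_n$ (which makes (\ref{Secmolo}) available) and $H=\tfrac12\sum_\alpha p_\alpha^2+V(q)$. Writing $\mathcal M^{(\nu,\Lambda^{\mathcal G})}=M_\alpha e^\alpha$ and using (\ref{HamH}), condition $(\dagger)$ becomes the identity in $(q,p)$
\[
M_\alpha p_\alpha+p_\alpha\rho^i_\alpha\frac{\partial\tilde\sigma}{\partial q^i}-\Big(\rho^i_\alpha\frac{\partial V}{\partial q^i}+C^\gamma_{\alpha\beta}p_\beta p_\gamma\Big)\frac{\partial\tilde\sigma}{\partial p_\alpha}=0.
\]
The asserted equation is fibre-linear and lives at the zero section, so I would extract it by applying $\partial/\partial p_\mu$ and setting $p=0$ (equivalently, differentiating along $(e^\mu)^{\bf v}$ and restricting to $0$). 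Since the $C$-term is quadratic in $p$, both it and its $p_\mu$-derivative vanish at $p=0$, and one is left with
\[
M_\mu+\rho^i_\mu\frac{\partial\sigma}{\partial q^i}=\frac{\partial^2\tilde\sigma}{\partial p_\mu\partial p_\alpha}\Big|_{p=0}\rho^i_\alpha\frac{\partial V}{\partial q^i},\qquad\sigma(q)=\tilde\sigma(0(q)).
\]

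It remains to recognize the two sides geometrically. By the cohomology formula (\ref{Cohomology}), $M_\mu+\rho^i_\mu\,\partial\sigma/\partial q^i$ is the $\mu$-component of $\mathcal M^{(e^\sigma\nu,\Lambda^{\mathcal G})}$, hence — in the orthonormal frame, where $\flat_{\mathcal G}$ sends $e_\alpha$ to $e^\alpha$ — of $\flat_{\mathcal G}^{-1}(\mathcal M^{(e^\sigma\nu,\Lambda^{\mathcal G})})$; and from $\mathbb F\tilde\sigma(q,p)=(q,\partial\tilde\sigma/\partial p)$ together with $(d^AV)^{\bf v}=\rho^i_\alpha(\partial V/\partial q^i)\,\partial/\partial p_\alpha$, the right-hand side is exactly the vertical component of $T\mathbb F\tilde\sigma\circ(d^AV)^{\bf v}\circ 0$. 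I expect the only real friction in the write-up to be this last identification: both sides of the stated equation are \emph{vertical} tangent vectors to $A$ (the horizontal parts vanish because $(d^AV)^{\bf v}$ is vertical and $\mathbb F\tilde\sigma$ covers $\id_Q$), but they are attached to different points of $A$ — the left-hand side over the zero section of $A$, the right-hand side over $\mathbb F\tilde\sigma(0(q))$ — so they must be compared through the canonical identification $V_{a_q}A\cong A_q$, under which a vertical lift does not depend on its foot-point. With this reading the coordinate identity above is precisely the claim. I would finally remark that the constant and higher-order parts of $(\dagger)$ yield further relations (for instance $\langle d^AV,\mathbb F\tilde\sigma(0)\rangle=0$), but only the fibre-linear part is needed for the theorem.
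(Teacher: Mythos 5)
Your proof is correct and takes essentially the same route as the paper's: part (i) is the same modular-class/divergence cancellation (the paper phrases it as the vanishing of the modular vector field associated with $e^{\mu}\nu\wedge\Lambda^{\mathcal G}$), and part (ii) is the paper's own computation in an orthonormal frame --- write the preservation identity in coordinates, apply $\partial/\partial p_{\mu}$, evaluate along the zero section, and identify the resulting terms via (\ref{Cohomology}), $\flat_{\mathcal G}$ and $\mathbb{F}\tilde{\sigma}$. Your closing remark on comparing vertical vectors at different foot-points through $V_{a_q}A\cong A_q$ simply makes explicit an identification the paper leaves implicit in its coordinate expressions.
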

\begin{proof}
(i) Since $A$ is unimodular, we deduce that there exists a real
$C^{\infty}$-function $\mu$ on $Q$ such that
\[
{\mathcal M}^{(\nu, \Lambda^{\mathcal G})} = -d^A\mu.
\]
Thus, from (\ref{Cohomology}), it follows that
\[
{\mathcal M}^{(e^{\mu}\nu, \Lambda^{\mathcal G})} = 0.
\]
This implies that ${\mathcal M}^{(e^{\mu}\nu \wedge
\Lambda^{\mathcal G})} = ({\mathcal M}^{(e^{\mu}\nu,
\Lambda^{\mathcal G})})^{\bf v} = 0$ (see (\ref{CamMod-SecMod})).
In particular,
\[
0 = {\mathcal M}^{(e^{\mu}\nu \wedge\Lambda^{\mathcal G})}(H) =
div_{(e^{\mu}\nu\wedge \Lambda^{\mathcal G})}({\mathcal
H}_{H}^{\Pi_{A^*}}).
\]
Therefore, the hamiltonian vector field ${\mathcal
H}_{H}^{\Pi_{A^*}}$ preserves the volume form of basic type
\[
\Phi = e^{\mu}\nu \wedge \Lambda^{\mathcal G}.
\]

(ii) Suppose that $(q^i)$ are local coordinates on $Q$ such that
\[
\nu = e^{\sigma_{U}^{\nu}} dq^1 \wedge \dots \wedge dq^m
\]
and that $\{e_{1}, \dots , e_{n}\}$ is a local orthonormal basis
of $\Gamma(A)$ with positive orientation. Then,
\[
\Lambda^{\mathcal G} = e_{1} \wedge \dots \wedge e_{n}.
\]
Moreover, if $(q^i, p_{\alpha})$ are the corresponding local
coordinates on $A^*$, we have that (see (\ref{Hamener}) and the
proof of Lemma \ref{Voldual})
\begin{equation}\label{LocHPhi}
H(q^i, p_{\alpha}) = \displaystyle \frac{1}{2} \sum_{\alpha}
(p_{\alpha})^2 + V(q^i), \makebox[.5cm]{} \Phi =
e^{\tilde{\sigma}}e^{\sigma_{U}^{\nu}} dq^1 \wedge \dots \wedge
dq^m \wedge dp_1 \wedge \dots \wedge dp_n.
\end{equation}
Thus, from (\ref{MnuLam}), (\ref{CamMod-SecMod}), (\ref{Secmolo})
and (\ref{LocHPhi}), it follows that
\[
\begin{array}{rcl}
0&=& e^{\tilde{\sigma}}e^{\sigma_{U}^{\nu}}{\mathcal
H}_{H}^{\Pi_{A^*}}(\tilde{\sigma})
dq^1 \wedge \dots \wedge dq^m \wedge dp_1 \wedge \dots \wedge dp_n \\[5pt]
&& + \displaystyle
e^{\tilde{\sigma}}e^{\sigma_{U}^{\nu}}(\frac{\partial
\rho^i_{\alpha}}{\partial q^i} + C_{\alpha\beta}^{\beta}+
\rho^i_{\alpha} \frac{\partial \sigma_{U}^{\nu}}{\partial
q^i})p_{\alpha} dq^1 \wedge \dots \wedge dq^m \wedge dp_1 \wedge
\dots \wedge dp_n.
\end{array}
\]
Therefore, using (\ref{HamH}) and (\ref{LocHPhi}), we deduce that
\[
\displaystyle \rho^i_{\alpha}p_{\alpha}\frac{\partial
\tilde{\sigma}}{\partial q^i} - (\frac{\partial V}{\partial
q^i}\rho^i_{\alpha}\frac{\partial \tilde{\sigma}}{\partial
p_{\alpha}} +
C_{\alpha\beta}^{\gamma}p_{\beta}p_{\gamma}\frac{\partial
\tilde{\sigma}}{\partial p_{\alpha}}) + (\frac{\partial
\rho^i_{\alpha}}{\partial q^i} + C_{\alpha\beta}^{\beta} +
\rho^i_{\alpha} \frac{\partial \sigma_{U}^{\nu}}{\partial
q^i})p_{\alpha} = 0.
\]
Now, if we take the derivative of the above expression with
respect to the variable $p_{\mu}$, we obtain that
\[
\begin{array}{rcl}
0&=& \displaystyle \rho^i_{\mu}\frac{\partial \tilde{\sigma}}{\partial q^i}
+ \rho^i_{\alpha} p_{\alpha} \frac{\partial^2\tilde{\sigma}}{\partial q^i \partial p_{\mu}}
-\frac{\partial V}{\partial q^i}\rho^i_{\alpha}
\frac{\partial^2\tilde{\sigma}}{\partial p_{\alpha} \partial p_{\mu}}
- C_{\alpha\mu}^{\gamma} p_{\gamma} \frac{\partial \tilde{\sigma}}{\partial p_{\alpha}} \\[5pt]
&& \displaystyle -C_{\alpha\beta}^{\mu}p_{\beta}\frac{\partial
\tilde{\sigma}}{\partial p_{\alpha}}-
C_{\alpha\beta}^{\gamma}p_{\beta}p_{\gamma}\frac{\partial^2\tilde{\sigma}}{\partial
p_{\alpha}\partial p_{\mu}} + \frac{\partial \rho^i_\mu}{\partial
q^i} + C_{\mu\beta}^{\beta} + \rho^i_{\mu} \frac{\partial
\sigma_{U}^{\nu}}{\partial q^i}.
\end{array}
\]
Consequently, along the zero section $0: M \to A$, we have that
\begin{equation}\label{Keypoint}
\displaystyle \rho^i_{\mu}\frac{\partial (\sigma_{U}^{\nu}
+\sigma)}{\partial q^i} + \frac{\partial \rho^i_\mu}{\partial q^i}
+ C_{\mu\beta}^{\beta} = \frac{\partial V}{\partial
q^i}\rho^i_{\alpha} (\frac{\partial^2\tilde{\sigma}}{\partial
p_{\alpha} \partial p_{\mu}})_{|0(Q)}, \mbox{ for all } \mu.
\end{equation}
On the other hand, from (\ref{Derver}), it follows that
\[
\mathbb{F}\tilde{\sigma}(q^i, p_{\alpha}) = (q^i, \displaystyle
\frac{\partial \tilde{\sigma}}{\partial p_{\alpha}}).
\]
This implies that
\begin{equation}\label{Aux1}
\displaystyle T\mathbb{F}\tilde{\sigma} \circ (d^AV)^{\bf v} =
(\frac{\partial V}{\partial q^i}\rho^i_{\alpha}
\frac{\partial^2\tilde{\sigma}}{\partial p_{\alpha} \partial
p_{\mu}})\frac{\partial}{\partial p_{\mu}}.
\end{equation}
In addition, since $\{e_{\alpha}\}$ is an orthonormal basis, we
have that
\[
\flat_{\mathcal G}(e_{\alpha}) = e^{\alpha}, \; \; \mbox{ for all
} \alpha.
\]
Thus, from (\ref{Secmolo}) and (\ref{Cohomology}), we deduce that
\begin{equation}\label{Aux2}
(\flat_{\mathcal G}^{-1}({\mathcal M}^{(e^{\sigma}\nu,
\Lambda^{\mathcal G})}))^{\bf v} = (\displaystyle
\rho^i_{\mu}\frac{\partial (\sigma + \sigma_{U}^{\nu})}{\partial
q^i} + \frac{\partial \rho^i_\mu}{\partial q^i} +
C_{\mu\beta}^{\beta})\frac{\partial}{\partial p_{\mu}}.
\end{equation}
Therefore, using (\ref{Keypoint}), (\ref{Aux1}) and (\ref{Aux2}),
we prove the result.
\end{proof}
Note that if $\tilde{\sigma}$ is a basic function then, from
(\ref{Aux1}), it follows that
\[
\displaystyle T\mathbb{F}\tilde{\sigma} \circ (d^AV)^{\bf v} = 0.
\]
Consequently, using Theorem \ref{maintheorem}, we obtain that
\begin{corollary}\label{Corolario1}
Let $H: A^* \to \R$ be the hamiltonian energy induced by a bundle
metric on $A$ and a potential energy $V: Q \to \R$. Then, $A$ is
unimodular if and only if the hamiltonian vector field ${\mathcal
H}_{H}^{\Pi_{A^*}}$ of $H$ preserves a volume form on $A^*$ of
basic type. In fact, if ${\mathcal M}^{(\nu, \Lambda^{\mathcal
G})} = -d^{A}\sigma$, with $\sigma \in C^{\infty}(Q)$ and
${\mathcal M}^{(\nu, \Lambda^{\mathcal G})}$ the modular section
of $A$ with respect to the volumes $\nu$ and $\Lambda^{\mathcal
G}$, we have that ${\mathcal H}_{H}^{\Pi_{A^*}}$ preserves the
volume form on $A^*$
\[
\Phi = e^{\sigma}\nu \wedge \Lambda^{\mathcal G}.
\]
\end{corollary}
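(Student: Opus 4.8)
The plan is to read this corollary as a packaging of the two parts of Theorem~\ref{maintheorem}, with essentially all the analytic content already carried by that theorem. One implication is immediate: if $A$ is unimodular then Theorem~\ref{maintheorem}(i) already asserts that ${\mathcal H}_{H}^{\Pi_{A^*}}$ preserves the basic-type volume $\Phi = e^{\sigma}\nu \wedge \Lambda^{\mathcal G}$, where ${\mathcal M}^{(\nu,\Lambda^{\mathcal G})} = -d^A\sigma$. So I would first simply invoke part (i), which also supplies the explicit form of $\Phi$ claimed in the ``in fact'' sentence. The real work is the converse direction, which is not contained verbatim in the theorem.

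For the converse, I would assume that ${\mathcal H}_{H}^{\Pi_{A^*}}$ preserves a basic-type volume form $\Phi = e^{\tilde\sigma}\nu \wedge \Lambda^{\mathcal G}$. By definition of basic type, $\tilde\sigma$ factors through $\tau_{A^*}$, so in the local coordinates $(q^i,p_\alpha)$ used in the proof of Theorem~\ref{maintheorem} one has $\partial\tilde\sigma/\partial p_\alpha = 0$. Substituting this into the local expression (\ref{Aux1}) gives $T\mathbb{F}\tilde\sigma \circ (d^AV)^{\bf v} = 0$, which is exactly the observation recorded in the note preceding the corollary. Feeding $\Phi$ into Theorem~\ref{maintheorem}(ii) and using this vanishing of the right-hand side then yields
\[
(\flat_{\mathcal G}^{-1}({\mathcal M}^{(e^{\sigma}\nu, \Lambda^{\mathcal G})}))^{\bf v} \circ 0 = 0,
\]
with $\sigma(q) = \tilde\sigma(0(q))$ the restriction of $\tilde\sigma$ to the zero section.

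From here I would argue that the modular section itself vanishes. The map $\flat_{\mathcal G}: \Gamma(\tau_A)\to\Gamma(\tau_{A^*})$ is an isomorphism of $C^\infty(Q)$-modules, and the vertical lift, evaluated along the zero section, identifies the fibre value $\beta(q)$ of a section $\beta$ with the vertical vector $\beta^{\bf v}(0(q))$; in particular $\beta^{\bf v}\circ 0 = 0$ forces $\beta = 0$. Applying this twice gives ${\mathcal M}^{(e^{\sigma}\nu, \Lambda^{\mathcal G})} = 0$, and then the cohomology formula (\ref{Cohomology}) rewrites this as ${\mathcal M}^{(\nu, \Lambda^{\mathcal G})} = -d^A\sigma$, so the modular class vanishes and $A$ is unimodular. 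The step I expect to require the most care is precisely this last identification: translating the pointwise equation along the zero section into the global statement that the modular \emph{section} is zero, i.e.\ checking that the composite of the vertical lift with the zero section is injective on fibre values so that no information is lost, and confirming that the $\sigma$ appearing in Theorem~\ref{maintheorem}(ii) is the same basic function produced by $\tilde\sigma$. Everything else reduces to quoting parts (i) and (ii) together with (\ref{Aux1}) and (\ref{Cohomology}).
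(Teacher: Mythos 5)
Your proposal is correct and follows exactly the paper's own route: the paper proves this corollary precisely by the note preceding it (basic $\tilde{\sigma}$ forces $T\mathbb{F}\tilde{\sigma}\circ(d^AV)^{\bf v}=0$ via (\ref{Aux1})) combined with the two parts of Theorem \ref{maintheorem}. The only difference is that you spell out the final steps the paper leaves implicit — that the vertical lift along the zero section recovers fibre values, so the modular section ${\mathcal M}^{(e^{\sigma}\nu,\Lambda^{\mathcal G})}$ itself vanishes, and then (\ref{Cohomology}) gives exactness of ${\mathcal M}^{(\nu,\Lambda^{\mathcal G})}$ — which is a faithful filling-in, not a different argument.
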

From Theorem \ref{maintheorem} we also obtain the following result
\begin{corollary}\label{Corolario2}
Let $H: A^* \to \R$ be the kinetic energy induced by a bundle
metric on $A$ (in this case the potential energy is constant).
Then, $A$ is unimodular if and only if the hamiltonian vector
field ${\mathcal H}_{H}^{\Pi_{A^*}}$ preserves a volume form on
$A^*$. In fact, if ${\mathcal M}^{(\nu, \Lambda^{\mathcal G})} =
-d^{A}\sigma$, with $\sigma \in C^{\infty}(Q)$ and ${\mathcal
M}^{(\nu, \Lambda^{\mathcal G})}$ the modular section of $A$ with
respect to the volumes $\nu$ and $\Lambda^{\mathcal G}$, we have
that ${\mathcal H}_{H}^{\Pi_{A^*}}$ preserves the volume form on
$A^*$
\[
\Phi = e^{\sigma}\nu \wedge \Lambda^{\mathcal G}.
\]
\end{corollary}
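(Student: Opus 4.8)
The plan is to derive Corollary \ref{Corolario2} directly from Theorem \ref{maintheorem} by specializing to the case of constant potential energy. Since the two directions of the ``if and only if'' require separate arguments, I would organize the proof around them.

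\medskip

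\noindent\textbf{The forward direction.} Suppose $A$ is unimodular. This is immediate from part (i) of Theorem \ref{maintheorem}: unimodularity gives $\sigma \in C^\infty(Q)$ with ${\mathcal M}^{(\nu,\Lambda^{\mathcal G})}=-d^A\sigma$, and part (i) asserts that ${\mathcal H}_H^{\Pi_{A^*}}$ preserves $\Phi = e^{\sigma}\nu\wedge\Lambda^{\mathcal G}$, which is in particular a volume form on $A^*$. Note that the argument in part (i) makes no use of the potential $V$ whatsoever, so the constant-potential hypothesis is irrelevant here; this half holds verbatim.

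\medskip

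\noindent\textbf{The converse direction.} This is where the constant potential is essential, and I expect it to be the main obstacle. Assume ${\mathcal H}_H^{\Pi_{A^*}}$ preserves some volume form $\Phi = e^{\tilde\sigma}\nu\wedge\Lambda^{\mathcal G}$. First I would invoke part (ii) of Theorem \ref{maintheorem}, which yields
\[
(\flat_{\mathcal G}^{-1}({\mathcal M}^{(e^{\sigma}\nu,\Lambda^{\mathcal G})}))^{\bf v}\circ 0 = (T\mathbb{F}\tilde\sigma\circ(d^AV)^{\bf v})\circ 0.
\]
Now the key observation: when $V$ is constant, $d^AV=0$, so the right-hand side vanishes identically. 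Since $\flat_{\mathcal G}$ and the vertical lift are isomorphisms (fiberwise injective), vanishing of $(\flat_{\mathcal G}^{-1}({\mathcal M}^{(e^{\sigma}\nu,\Lambda^{\mathcal G})}))^{\bf v}\circ 0$ forces ${\mathcal M}^{(e^{\sigma}\nu,\Lambda^{\mathcal G})}=0$. By the cohomology formula (\ref{Cohomology}), ${\mathcal M}^{(\nu,\Lambda^{\mathcal G})} = {\mathcal M}^{(e^{\sigma}\nu,\Lambda^{\mathcal G})}-d^A\sigma = -d^A\sigma$, so the modular section is $d^A$-exact and hence $A$ is unimodular.

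\medskip

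The subtle point I would check carefully is the passage from the stated equality ``along the zero section'' to the global vanishing of the section ${\mathcal M}^{(e^{\sigma}\nu,\Lambda^{\mathcal G})}$. The modular section is a genuine section of $\tau_{A^*}$ over $Q$, so its value is determined pointwise on $Q$; the composition ``$\circ\,0$'' with the zero section in the theorem statement simply evaluates the vertical lift at the zero fiber, and since a vertical lift of a section vanishing at one point of each fiber vanishes as a section, the identification is clean. Thus no genericity or fiberwise argument is needed beyond injectivity of $\flat_{\mathcal G}$. Once unimodularity is established, the explicit form of the preserved volume, $\Phi = e^{\sigma}\nu\wedge\Lambda^{\mathcal G}$ with ${\mathcal M}^{(\nu,\Lambda^{\mathcal G})}=-d^A\sigma$, follows again from part (i), completing the proof.
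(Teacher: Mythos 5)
Your proposal is correct and matches the paper's intended derivation: the paper obtains Corollary \ref{Corolario2} directly from Theorem \ref{maintheorem}, with part (i) giving the forward implication and part (ii) giving the converse, since a constant potential forces $d^AV=0$, hence ${\mathcal M}^{(e^{\sigma}\nu,\Lambda^{\mathcal G})}=0$ by injectivity of the vertical lift and $\flat_{\mathcal G}$, and then unimodularity follows from (\ref{Cohomology}). Your careful treatment of the evaluation along the zero section is exactly the right justification for that step.
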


\section{Examples}

\subsection{Standard mechanical hamiltonian systems}
Let $Q$ be a smooth manifold of dimension $m$ and $H:T^*Q \to \R$
be a hamiltonian function,
\[
H(\alpha) = \displaystyle \frac{1}{2} {\mathcal G}(\alpha, \alpha)
+ V(\tau_{T^*Q}(\alpha)), \mbox{ for } \alpha \in T^*Q,
\]
with ${\mathcal G}$ a Riemannian metric on $Q$ and $V:Q \to \R$
the potential energy. In this case, our Lie algebroid is the
standard one $\tau_{TQ}: TQ \to Q$. As we know (see Subsection
\ref{modularclass}), $\tau_{TQ}: TQ \to Q$ is a unimodular Lie
algebroid and, thus, the hamiltonian vector field ${\mathcal
H}_{H}^{\Pi_{T^*Q}}$ preserves a volume form $\Phi$ on $T^*Q$ of
basic type (see Theorem \ref{maintheorem}). In fact, we may take
$\Phi = \Omega_{Q}^{m}$, where $\Omega_{Q}$ is the canonical
symplectic structure of $T^*Q$ (note that if $Z$ is an arbitrary
hamiltonian vector field on $T^*Q$ then ${\mathcal
L}_{Z}\Omega_{Q} = 0$).

\subsection{Mechanical hamiltonian systems on the dual bundle to
an Atiyah algebroid} Let $Q = G \times M$ be total space of a
tricial principal $G$-bundle over a manifold $M$ of dimension $m$.
Then, the Atiyah algebroid associated with the principal bundle is
the vector bundle $\tau_{{\frak g} \times TM}: {\frak g} \times TM
\to M$, where ${\frak g}$ is the Lie algebra of $G$. Thus, if
${\mathcal G}$ is a bundle metric on $\tau_{{\frak g} \times TM}:
{\frak g} \times TM \to M$ and $V: M \to \R$ is a real
$C^{\infty}$-function on $M$, we may consider the hamiltonian
function $H: {\frak g}^* \times T^*M \to \R$ given by
\[
H(\alpha, \beta) = \displaystyle \frac{1}{2} {\mathcal G}((\alpha,
\beta), (\alpha, \beta)) + V(\tau_{T^*M}(\beta)),
\]
for $(\alpha, \beta) \in {\frak g}^* \times T^*M$.

Denote by ${\mathcal H}_{H}^{\Pi_{{\frak g}^* \times T^*M}}$ the
hamiltonian vector field of $H$ with respect to the linear Poisson
structure $\Pi_{{\frak g}^* \times T^*M}$ on ${\frak g}^* \times
T^*M$. Then, using Corollary \ref{Corolario1} (see also Subsection
\ref{modularclass}), we deduce that the vector field ${\mathcal
H}_{H}^{\Pi_{{\frak g}^* \times T^*M}}$ preserves a volume form
$\Phi$ on ${\frak g}^* \times T^*M$ of basic type if and only if
${\frak g}$ is a unimodular Lie algebra. In fact, if ${\frak g}$
is unimodular and $\{e_{\alpha}\}$ is a basis of ${\frak g}$ then
we may take
\[
\Phi = dp_{1} \wedge \dots \wedge dp_{n} \wedge \Omega_{M}^{m}
\]
where $p_{\alpha}$ are the (global) coordinates on ${\frak g}^*$
induced by the basis $\{e_{\alpha}\}$ and $\Omega_{M}$ is the
canonical symplectic structure on $T^*M$.

Note that in the particular case when $M$ is a single point then
the potential energy $V$ is a constant and the linear Poisson
structure on ${\frak g}^* \times T^*M \simeq {\frak g}^*$ is just
the Lie-Poisson structure of ${\frak g}^*$. Thus, we recover a
result which was proved by Kozlov \cite{Ko}: {\em the hamiltonian
vector field ${\mathcal H}_{H}^{\Pi_{{\frak g}^*}}$ preserves a
volume form on ${\frak g}^*$ if and only if ${\frak g}$ is
unimodular}.

A very simple example of a mechanical system on an Atiyah
algebroid is {\em the Elroy's beanie}: two planar rigid bodies
attached at their centers of mass, moving freely in the plane (see
\cite{Le,Os}). In this case:
\begin{itemize}
\item
The manifold $M$ is $S^1$ and the Lie group $G = SE(2)$.
\item
The bundle metric ${\mathcal G}$ on $\tau_{{\frak se}(2) \times
TS^1}: {\frak se}(2) \times TS^1 \simeq \R^3 \times (S^1 \times
\R) \to S^1$ is given by
\[
{\mathcal G}(\theta)(((\xi_{1}, \xi_{2}, \xi_{3}), t), ((\xi'_{1},
\xi'_{2}, \xi'_{3}), t')) = m(\xi_{1}\xi'_{1} + \xi_{2}\xi'_{2}) +
(I_{1} + I_{2})\xi_{3}\xi'_{3} + I_{2}tt' + I_{2}(\xi_{3}t' +
t\xi'_{3})
\]
for $\theta \in S^1$, $(\xi_{1},\xi_{2},\xi_{3}), (\xi'_{1},
\xi'_{2}, \xi'_{3}) \in {\frak se}(2) \simeq \R^3$ and $t, t' \in
\R$, where $m$, $I_1$ and $I_{2}$ are constants.

\end{itemize}

Since ${\frak se}(2)$ is a unimodular Lie algebra, we deduce that
the hamiltonian dynamics on ${\frak se}(2)^{*} \times T^*S^1
\simeq \R^3 \times (S^1 \times \R)$ preserves the volume form
\[
\Phi = dp_{1} \wedge dp_{2} \wedge dp_{3} \wedge \psi \wedge dt,
\]
where $\psi$ is the length element of $S^1$.

\subsection{Mechanical hamiltonian systems on the dual bundle to
an action Lie algebroid} Let $\Phi: {\frak g} \to {\frak X}(Q)$ be
a left infinitesimal action of ${\frak g}$ on a manifold $Q$ and
$\tau_{{\frak g}\times Q}: {\frak g}\times Q \to Q$ be the
corresponding action Lie algebroid. Suppose that $<\cdot, \cdot>$
is a scalar product on ${\frak g}$, that $V: Q \to \R$ is a real
$C^{\infty}$-function on $Q$ and that $H: {\frak g}^* \times Q\to
\R$ is the corresponding hamiltonian function given by
\[
H(\alpha, q) = \displaystyle \frac{1}{2} <\alpha, \alpha> + V(q),
\mbox{ for } \alpha \in {\frak g}^* \mbox{ and } q\in Q.
\]
Denote by ${\mathcal H}_{H}^{\Pi_{{\frak g}^* \times Q}}$ the
hamiltonian vector field of $H$ with respect to the linear Poisson
structure $\Pi_{{\frak g}^*\times Q}$ on ${\frak g}^*\times Q$.
Then, using Corollary \ref{Corolario2} (see also Subsection
\ref{modularclass}) we deduce that the hamiltonian vector field
${\mathcal H}_{H}^{\Pi_{{\frak g}^* \times Q}}$ preserves a volume
form $\Phi$ on ${\frak g}^* \times Q$ of basic type if and only if
there exists a real $C^{\infty}$-function $\sigma: Q \to \R$ such
that
\begin{equation}\label{vol-action}
div_{\nu}\Phi(\xi) - \Phi(\xi)(\sigma) = {\mathcal M}_{\frak
g}(\xi), \mbox{ for all } \xi \in {\frak g},
\end{equation}
where $\nu$ is an arbitrary volume form on $Q$ and ${\mathcal
M}_{\frak g}$ is the modular character of ${\frak g}$. Moreover,
if (\ref{vol-action}) holds and $\{e_{\alpha}\}$ is a basis of
${\frak g}$ we may take
\[
\Phi = e^{-\sigma}dp_{1} \wedge \dots \wedge dp_{n} \wedge \nu
\]
where $(p_{\alpha})$ are the global coordinates on ${\frak g}^*$
induced by the basis $\{e_{\alpha}\}$.

Note that if the Lie algebra ${\frak g}$ is unimodular (that is,
${\mathcal M}_{\frak g} = 0$) and the left infinitesimal action
preserves the volume form $\nu$ (that is, $div_{\nu}\Phi(\xi) =
0$, for all $\xi \in {\frak g}$) then (\ref{vol-action}) holds (we
may take $\sigma = 0$). As a particular example we can consider an
interesting mechanical system: {\em the heavy-top} (see
\cite{Ma}). In this case:
\begin{itemize}
\item
${\frak g} = {\frak so}(3) \simeq \R^3$ is the Lie algebra of the
special orthogonal group $SO(3)$ and $M$ is the sphere $S^2$.
\item
$\Phi: {\frak so}(3) \to S^2$ is the left infinitesimal action
induced by the standard left action of the Lie group $SO(3)$ on
$S^2$.
\item
The scalar product on ${\frak g} = {\frak so}(3)$ is given by
\[
<\omega, \omega'> = \omega \cdot I\omega,
\]
$I$ being the inertia tensor of the top.
\item
The potential $V: S^2 \to \R$ is defined by
\[
V(x) = mgl x\cdot e, \mbox{ for } x\in S^2,
\]
where $e$ is the unit vector from the fixed point to the center of
mass and $m$, $g$ and $l$ are constants.

\end{itemize}

Note that the action of $SO(3)$ on $S^2$ preserves the symplectic
volume $\nu$ of $S^2$ and that ${\frak g} = {\frak so}(3)$ is a
unimodular Lie algebra. Thus, the hamiltonian dynamics preserves
the volume $\Phi$ on ${\frak g}^* \times S^2 \simeq {\frak
so}(3)^* \times S^2 \simeq \R^3 \times S^2$ given by
\[
\Phi = dp_{1} \wedge dp_{2} \wedge dp_{3} \wedge \nu.
\]

\section{Conclusions and outlook}
We have proved that a hamiltonian system of mechanical type on the
dual bundle to a Lie algebroid $A$ preserves a volume form on
$A^*$ of basic type if and only if $A$ is unimodular. In addition,
if the potential energy of the hamiltonian function is constant
then we deduce that the hamiltonian dynamics preserves a volume
form on $A^*$ (not necessarily of basic type) if and only if $A$
is unimodular. These results generalize Liouville's theorem (see
\cite{AbMa}) and a previous result for Lie-Poisson equations which
was proved by Kozlov \cite{Ko}.

On the other hand, after we finished this paper, we obtained some
examples of hamiltonian systems of mechanical type (with
non-constant potential energy) on the dual bundle to a
non-unimodular Lie algebroid which preserve a volume form of
non-basic type. So, it would be interesting to discuss the
existence of such volumes on non-unimodular Lie algebroids. This
will be the subject of a separate publication.

Another goal we have proposed is to extend the results of this
paper for non-holonomic mechanical systems on Lie algebroids (see
\cite{FeMa}). Previous results in this direction for some
particular classes of Lie algebroids have obtained by several
authors (see \cite{Jo,Ko,Ko1,ZeBl}).


\begin{thebibliography}{99}
\let\\, \newcommand{\by}[1]{\textsc{\ignorespaces #1}\\}
  \newcommand{\title}[1]{\textsl{\ignorespaces #1}\\}
  \newcommand{\vol}[1]{{\bf{\ignorespaces #1}}}
  \newcommand{\info}[1]{\textrm{\ignorespaces #1}.}

    \bibitem{AbMa} \by{Abraham R, Marsden JE} \title{Foundations of
Mechanics} \info{(2nd. edition), Benjamin/Cummings, Reading,
Massachusetts, 1978}

  \bibitem{Co} \by{Courant TJ} \title{Dirac manifolds} \info{Trans. Amer. Math.
Soc. \vol{319} (1990), 631--661}

\bibitem{EvLuWe} \by{Evens S, Lu J-H, Weinstein, A}
\title{Transverse measures, the modular class and a
cohomology pairing for Lie algebroids} \info{Quart. J. Math.
Oxford \vol{50} (1999) 417-436}

\bibitem{FeMa} \by{Fedorov Y, Marrero JC} \title{Hamiltonian
dynamics on skew-symmetric algebroids, unimodularity and
preservation of volumes in nonholonomic mechanics} \info{in
preparation}

  \bibitem{HiMa} \by{Higgins PJ, Mackenzie K} \title{Algebraic
constructions in the category of Lie algebroids} \info{J. Algebra
\vol{129} (1990), 194--230}

\bibitem{Jo} \by{Jovanovic B} \title{Nonholonomic geodesic
flows on Lie groups and the integrable Suslov problem on SO(4)}
\info{J. Phys. A: Math. Gen. \vol{31} (1998), 1415–-22}

\bibitem{Ko} \by {Kozlov VV} \title{Invariant measures of the
Euler-Poincar\'e equations on Lie algebras} \info{Funkt. Anal.
Prilozh. \vol{22} 69--70 (Russian); English trans.: Funct. Anal.
Appl. \vol{22} (1988), 58--59}

\bibitem{Ko1} \by{Kozlov VV} \title{On the integration theory of
equations of nonholonomic Mechanics} \info{Regular and Chaotic
Dynamics \vol{7} (2002), 161--176}

\bibitem{LeMaMa}
  \by{de Le\'on M, Marrero JC, Mart{\'\i}nez E}
  \title{Lagrangian submanifolds and dynamics on Lie algebroids}
  \info{J. Phys. A: Math. Gen. \vol{38} (2005), R241--R308}

  \bibitem{Le} \by{Lewis A} \title{Reduction of simple mechanical
  systems} \info{Mechanics and symmetry seminars, University of Warwick,
 1997 \texttt{http://penelope.mast.queensu.ca/~andrew/}}


\bibitem{Mac}
  \by{Mackenzie K}
  \title{General Theory of Lie Groupoids and Lie Algebroids}
  \info{London Mathematical Society Lecture Note Series \vol{213},
  Cambridge University Press, 2005}

  \bibitem{Ma} \by{Mart\'{\i}nez E}  \title{Lagrangian Mechanics
   on Lie algebroids} \info{Acta Appl. Math. \vol{67} (2001),
   295--320}

  \bibitem{MaRa} \by{Marsden JE, Ratiu T} \title{Introduction to
  Mechanics with symmetry} \info{Texts in Applied Mathematics
  \vol{17} Springer-Verlag 1994}

  \bibitem{OrRa} \by{Ortega JP, Ratiu TS} \title{Momentum maps
  and Hamiltonian reduction} \info{Progress in Mathematics, \vol{222}.
  Birkh{\"a}user Boston, Inc., Boston, MA, 2004}

  \bibitem{Os}  \by{Ostrowski JP} \title{The mechanics and control
  of undulatory robotic locomotion} \info{PhD thesis, California
  Institute of Technology 1995}

  \bibitem{We0} \by{Weinstein A} \title{Lagrangian Mechanics and
groupoids} \info{Fields Inst. Comm. \vol{7} (1996), 207--231}

\bibitem{We} \by{Weinstein A} \title{The modular automorphism
group of a Poisson manifold} \info{J. Geom. Phys. \vol{23},
(1997), 379--394}

\bibitem{ZeBl} \by{Zenkov DV, Bloch AM} \title{Invariant measures
of nonholonomic flows with internal degrees of freedom} \info{
Nonlinearity 16 (2003), 1793–-1807}

\end{thebibliography}
\end{document}